\newtheorem{theorem}{Theorem}
\newtheorem{definition}{Definition}
\newtheorem{example}{Example}
\newtheorem{lemma}{Lemma}
\newtheorem{remark}{Remark}
\begin{document}

\title{Inter-Operator Base Station Coordination in Spectrum-Shared Millimeter Wave Cellular Networks}

\author{Jeonghun~Park, Jeffrey~G.~Andrews, and Robert~W.~Heath Jr.
\thanks{J. Park, J. G. Andrews, and R. W. Heath Jr. are with the Wireless Networking and Communication Group (WNCG), Department of Electrical and Computer Engineering, 
The University of Texas at Austin, TX 78701, USA. (E-mail: $\left\{\right.$jeonghun, rheath$\left\}\right.$@utexas.edu, jandrews@ece.utexas.edu)

This research was supported in part by a gift from AT$\&$T Laboratories and by the National Science Foundation under Grant No. NSF-CCF-1514275.
}}

\maketitle \setcounter{page}{1} 

\begin{abstract}
We characterize the rate coverage distribution for a spectrum-shared millimeter wave downlink cellular network.  Each of multiple cellular operators owns separate mmWave bandwidth, but shares the spectrum amongst each other while using dynamic inter-operator base station (BS) coordination to suppress the resulting cross-operator interference. We model the BS locations of each operator as mutually independent Poisson point processes, and derive the probability density function (PDF) of the $K$-th strongest link power, incorporating both line-of-sight and non line-of-sight states.
Leveraging the obtained PDF, we derive the rate coverage expression as a function of system parameters such as the BS density, transmit power, bandwidth, and coordination set size. We verify the analysis with extensive simulation results. A major finding is that inter-operator BS coordination is useful in spectrum sharing (i) with dense and high power operators and (ii) with fairly wide beams, e.g., $30^{\circ}$ or higher.
\end{abstract}

\section{Introduction}

Millimeter wave (mmWave) cellular networks improve conventional cellular data rates due to their large bandwidths \cite{rappa:access:13, roh:commmag:14, bai:twc:15}. The total amount of mmWave spectrum that is likely to be accessible to cellular operators in the near future, though, is a relatively small fraction of the total possible spectrum. Historically, operators acquire exclusive licenses in the spectrum \cite{boccardi:commag:16, shokri:jsac:16}, which further degrades the amount of spectrum that any particular mobile user can access. Given the novel interference-reducing features of millimeter wave systems, most notably directional beamforming and sensitivity to blocking, it may be preferable to pool and share spectrum licenses among multiple cellular operators \cite{boccardi:commag:16, shokri:jsac:16, gupta:tcom:16}. 
For example, \cite{gupta:tcom:16} showed that even uncoordinated sharing can increase the median rate. The favorable tradeoff observed in \cite{gupta:tcom:16} is that the bandwidth increase from spectrum sharing has a more significant (positive) impact on the rate of most users than the SINR degradation from the increased interference. This is a different tradeoff than in conventional cellular systems which do not benefit from highly directional beamforming: in such systems uncoordinated spectrum sharing is a losing proposition.


To obtain consistent rate gains from spectrum sharing in a wide variety of cellular network environments, directional beamforming may not be sufficient for interference suppression. For example, cellular operators can have different BS deployment densities, and users of a network with fewer BSs suffer when another operator's BSs become interferers, even accounting for highly directional beamforming. Unless such inter-operator interference is managed, the lower density operator is not incentivized to share spectrum with the other operator. Also, recent measurements in \cite{valenzuela:ctw:17} show that there is more scattering and dispersion in mmWave systems than commonly believed, especially for non line-of-sight (NLoS) paths. Thus, the actual interference can be much higher than what a simple sectored antenna model would predict, since significant interference could be received even from beams pointed out of main-lobe directions.


The main goal of this paper is to characterize the prospective gain of inter-operator BS coordination in the context of spectrum sharing. Such coordination would reduce the interference. Although inter-operator BS coordination may currently seem impractical, it could be reasonable in future networks which are trending towards ever-increasing infrastructure aggregation \cite{boccardi:commag:16, kibilda:twc:17}. 
In the meantime, it provides a useful upper-bound on the possible gains from coordination.


\subsection{Prior Work}
Spectrum sharing is a well-studied subject in general, for example in the context of cognitive radios \cite{haykin:jsac:05, goldsmith:ieee:09}. 
We focus on mmWave spectrum sharing, which is much less studied. Following the earlier references \cite{boccardi:commag:16, shokri:jsac:16, gupta:tcom:16}, \cite{gupta:jsac:16} proposed a simple power control method to enhance the edge rate of primary users, who suffered in uncoordinated sharing. Specifically, secondary BSs decrease their transmit power such that their resulting interference is below threshold. In \cite{rebato:infocom:16}, not only spectrum, but also infrastructure and access sharing strategies were considered. In \cite{rebato:arxiv:16}, hybrid spectrum sharing was proposed, wherein the $20$ and $30$ GHz bands are used exclusively, while the $70 \;{\rm GHz}$ bands are shared. Users are jointly scheduled to one of these two bands depending on their SINRs, so that interference-limited users use $20/30 \;{\rm GHz}$ and noise-limited users use $70 \;{\rm GHz}$. This opportunistic sharing method shows some performance gain compared to a baseline sharing method. In \cite{li:crowncom:14}, an on-off spectrum sharing policy was proposed, where each operator allows the other operators to share the spectrum only if they incur a moderate level of interference. A similar approach was applied in WLAN systems \cite{feng:gcwkr:14}. In \cite{fund:sarnoff:16, fund:arxiv_economic:17}, an economic perspective of spectrum sharing in a mmWave cellular network was explored. A common aspect of prior work \cite{gupta:tcom:16, gupta:jsac:16, rebato:infocom:16, rebato:arxiv:16, li:crowncom:14, feng:gcwkr:14, fund:sarnoff:16, fund:arxiv_economic:17} is that they did not consider BS coordination between different operators. 

Inter-operator coordination in spectrum-shared mmWave cellular networks was discussed in \cite{boccardi:commag:16, shokri:jsac:16}. In \cite{boccardi:commag:16}, several network architectures that allow inter-operator coordination were presented, such as having a standardized core network interface. Alternatively, a new network entity called a \emph{spectrum broker} can be adopted for exchanging the information required for inter-operator coordination. The most closely related prior work is \cite{shokri:jsac:16}, where optimal cell association in spectrum-shared mmWave cellular networks employing inter-operator coordination was studied. A key difference in our work is an analysis of rate performance assuming random BS and user locations. 


\subsection{Contributions}


In this paper, we characterize the rate coverage distribution of spectrum-shared mmWave cellular networks. We assume that inter-operator BS coordination is exploited to mitigate the interference from other operators. 
Specifically, the $K$ strongest BSs of other operators are included in a coordination set. Subsequently, the BSs in the coordination set use precoders to remove the mutual interference in the set.  
To characterize the gain of BS coordination, we derive the PDF of the link power corresponding to the  $K$-th {strongest} BS. We note that this is different from BS coordination in prior work \cite{nigam:tcom:14, lee:twc:15, li:tcom:15}, which assumed a single link state so that the $K$-th strongest BS is also the $K$-th closest BS. In our case, LoS and NLoS links are instead distinguished by their path-loss exponents and path-loss intercepts, so link power is determined not only by link distance, but also by the LoS/NLoS state. As a result, the $K$-th strongest BS may not be equal to the $K$-th closest BS. The derived PDF incorporates this feature. We also show that the obtained PDF reduces to the previous results \cite{lee:twc:15} when system assumptions are simplified. In this sense, the obtained PDF is more general than \cite{lee:twc:15, li:tcom:15}.

Leveraging the obtained PDF, we derive the rate coverage expression, which is a function of system parameters such as the density, transmit power, bandwidth, path-loss exponents and path-loss intercepts, and the BS coordination set size. The obtained expression indicates how the rate coverage performance is affected by the system parameters. 
For example, when the coordination set size increases, less interference remains, which leads to rate coverage improvement. 
When there is no inter-operator BS coordination, there is no interference mitigation and this reduces the obtained expression to the previous results \cite{gupta:tcom:16}, which assumed uncoordinated spectrum sharing. In the simulation results, we verify the correctness of the obtained expressions. 


Our major findings from the analysis are as follows: (i) By using inter-operator BS coordination, spectrum sharing provides significant gains over uncoordinated case when sharing the spectrum with a dense and high power operator. (ii) Intra-operator BS coordination offers only marginal performance gain. (iii) Inter-operator BS coordination is more efficient when the beams are fairly wide, implying that they are complementary in the role of interference mitigation. 
In addition, we expect that inter-operator BS coordination is also valuable when there is sufficient scattering and dispersion.

The paper consists of four main parts. We introduce the system models in Section II, we characterize the performance of BS coordination in a spectrum-shared mmWave cellular network in Section III, and we provide numerical results in Section IV. We conclude the paper in Section V. 

\section{System Model}
In this section, we introduce the system model and assumptions used in this paper. We first describe the network model based on stochastic geometry and the spectrum sharing model. Then, we explain the difference of LoS and NLoS states and how the typical user is associated with the BS incorporating LoS/NLoS BSs. Next, we illustrate the inter-operator BS coordination in detail. In the following subsection, we introduce the channel model and performance metrics. 

\subsection{Network and Spectrum Sharing Model}
We consider a downlink network comprising of $M$ cellular operators, all using mmWave bands. Focusing on operator $m$ for $m \in \CMcal{M} = \{1,...,M\}$, the locations of the BSs are modeled as a homogeneous PPP $\Phi_m = \{{\bf{d}}_i^{(m)}, i \in \mathbb{N}\}$ with density $\lambda_m$. 
The BS locations of different operators are mutually independent, i.e., $\Phi_m$ and $\Phi_{m'}$ are independent for $m \neq m'$. 
Without loss of generality, we assume that $\left\| {\bf{d}}_i^{(m)}\right\| \le \left\| {\bf{d}}_j^{(m)}\right\|$ if $i<j$, so that ${\bf{d}}_1^{(m)}$ is the closest BS to the origin in operator $m$. 
The transmit power of operator $m$ is denoted as $P_m$. 
All the operators are equipped with $N$ antennas and $N_{\rm RF} $ RF chains for $N_{\rm RF} \le N$, where a hybrid precoding method \cite{alk:twc:15, kulkarni:tcom:16} is used. 

Users are also distributed as a homogeneous PPP, $\Phi_{m}^{\rm (u)} = \{{\bf{u}}_i^{(m)}, i \in \mathbb{N}\}$ with density $\lambda_m^{\rm (u)}$. In all the operators, a single user is served from its associated BS. We assume that the density of users $\lambda_{m}^{\rm (u)}$ is far greater than $\lambda_m$ for all $m \in \CMcal{M}$, thereby there is no empty cell with high enough probability. Per Slivnyak's theorem \cite{baccelli:inria}, we henceforth focus on the typical user located at the origin denoted as ${\bf{o}}$. Without loss of generality, we assume that the typical user is in operator $1$. 

All the operators in the network share the spectrum among other operators. We assume that each operator owns separate bands, and denote that operator $m$'s bandwidth is $W_m$. In spectrum sharing, the typical user makes use of effective bandwidth $\sum_{m \in \CMcal{M}}W_m$. 

%

\subsection{Link State and Association Model}
Any link from a BS to the typical user is LoS or NLoS. Each state is represented by a state parameter $s$, where $s={\rm L}$ means a LoS link and $s = {\rm N}$ means a NLoS link. 
As in \cite{bai:twc:15}, the LoS/NLoS states are randomly determined depending on the link distance. Assuming a link between the typical user and an arbitrary BS located at ${\bf{d}}_i^{(m)}$ whose distance is $\left\| {\bf{d}}_i^{(m)}\right\| = r $, the link state is LoS (or $s = {\rm L}$) with the probability $p(r) = e^{-r/\mu}$, where $\mu$ is the average LoS length. The parameter $\mu$ is determined depending on the blockage density and the geometry \cite{bai:twc:15}. Under this setting, the state parameter $s$ is a random variable for each link. 
 
A LoS link and a NLoS link are different in their path-loss exponents and intercepts, denoted as $\alpha_{s}$ and $C_s$ for $s \in \{\rm L, N\}$. For example, considering a LoS link whose link distance is $r$, the corresponding path-loss is $C_{\rm L} r^{-\alpha_{\rm L}}$. For a NLoS link with the same distance, the corresponding path-loss is $C_{\rm N} r^{-\alpha_{\rm N}}$.  For ease of notation, we separate the total set $\Phi_m$ into a LoS BS set and a NLoS BS set depending on the corresponding links' states. The BS located at ${\bf{d}}_i^{(m)}$ is included in $\Phi_{m, \rm L}$ if the link between ${\bf{o}}$ and ${\bf{d}}_i^{(m)}$ is LoS, otherwise it is included in $\Phi_{m, \rm N}$. We note that $\emptyset = \Phi_{m, \rm L} \cap \Phi_{m, \rm N}$ and $\Phi_{m} = \Phi_{m, \rm L} \cup \Phi_{m, \rm N}$ for $m \in \CMcal{M}$. 

Among all the BSs including LoS and NLoS BSs in operator $1$, the typical user is associated with the strongest BS. Denoting $i_a$ as the associated BS index, we write 
\begin{align} \label{eq:asso_rule}
i_a = \mathop {\arg \max} \limits_{i \in \mathbb{N}} C_s \left\|{\bf{d}}_i^{(1)} \right\|^{-\alpha_s}.
\end{align}
We note that the associated BS can be changed depending on the link state variable $s$. 



\subsection{Base Station Coordination Model}
We first define a BS coordination set $\CMcal{A}_m$ for $m \in \{1,...,M\}$. 
To form the coordination set $\CMcal{A}_m$, a dynamic clustering strategy is used, where the $K_m$ strongest BSs of operator $m$ are included in $\CMcal{A}_m$. For example, if the coordination set $\CMcal{A}_m = \{{\bf{d}}_{i_1}^{(m)},...,{\bf{d}}_{i_{K_m}}^{(m)}\}$ then the following satisfies.
\begin{align} \label{eq:clustering}
C_{s} \left\|  {\bf{d}}_{i_1}^{(m)}\right\|^{-\alpha_{s}} \ge C_{s} \left\|  {\bf{d}}_{i_2}^{(m)}\right\|^{-\alpha_{s}} \ge... \ge C_{s} \left\|  {\bf{d}}_{i_{K_m}}^{(m)}\right\|^{-\alpha_{s}},
\end{align}
and $C_{s} \left\|  {\bf{d}}_{i_{K_m}}^{(m)}\right\|^{-\alpha_{s}} \ge C_{s} \left\|  {\bf{d}}_{j}^{(m)}\right\|^{-\alpha_{s}}$ for all $j \in \mathbb{N} \backslash \CMcal{A}_m$. We note that the coordination set $\CMcal{A}_m$ only includes operator $m$'s BSs, i.e., $\CMcal{A}_m \subseteq \Phi_{m}$. 
The other operator's coordination set $\CMcal{A}_{m'}$, $m \neq m'$ is formed by the similar way. For operator $1$, it is always true that $\left| \CMcal{A}_1\right| \ge 1$ since the typical user is associated with the strongest BS by the association rule \eqref{eq:asso_rule}. 

It is worthwhile to note that members of $\CMcal{A}_m$ change depending on the each link's state since the link power depends on link state. Unlike our case, assuming there exists a single link state as in \cite{lee:twc:15, li:tcom:15}, $\CMcal{A}_m$ is fixed as $\CMcal{A}_m = \{{\bf{d}}_1^{(m)}, ..., {\bf{d}}_{K_m}^{(m)}\}$, which is equal to a set of the $K_m$ closest BSs. In that case, the link power is solely determined by link distance. This is not the case in our setting since two link states, i.e., LoS/NLoS, are considered, making it a key difference from dynamic BS coordination in prior work \cite{lee:twc:15, li:tcom:15}.

Once each coordination set $\CMcal{A}_m$ for $m \in \CMcal{M}$ is formed, they make a total set $\CMcal{A}_{\rm total}$ by $\CMcal{A}_{\rm total} = \bigcup_{m=1}^{M} \CMcal{A}_m$. 
By this formation, the total coordination set $\CMcal{A}_{\rm total}$ is able to include BSs of multiple different operators, i.e., $\CMcal{A}_{\rm total} \subseteq \bigcup_{m = 1}^{M} \Phi_m$. All the BSs included in $\CMcal{A}_{\rm total}$ use precoder to remove the mutual interference inside the set $\CMcal{A}_{\rm total}$. The interference cancellation process will be explained in detail later.
We note that the cardinality of each coordination set $\left| \CMcal{A}_m\right|$ indicates the coordination level of operator $m$. For example, assume that $\left|\CMcal{A}_m \right| = 0$, $m \neq 1$. Then, there is no BS coordination in operator $m$, so that the interference of operator $m$ is not mitigated. 
Assuming that $\left|\CMcal{A}_m \right| = 0$ for $m \in \CMcal{M} \backslash 1$, there is no BS coordination in other operators except $1$. Since the typical user is associated with the operator $1$'s BS, this means that only intra-operator BS coordination is used.
If $\left|\CMcal{A}_m \right| = 0$ for $\CMcal{M} \backslash 1$ and $\left| \CMcal{A}_1\right| = 1$, no intra- or inter-operator BS coordination is used, and the assumption becomes same to uncoordinated spectrum sharing \cite{gupta:tcom:16}. 
We denote that $\left| \CMcal{A}_m\right|=K_m$, and $\left| \CMcal{A}_{\rm total} \right|=K_{\rm total}$. Since there is no intersection between $\CMcal{A}_{m}$ and $\CMcal{A}_{m'}$, we can write $K_{\rm total} = \sum_{m = 1}^{M} K_m$. For analytical simplicity, we assume $N_{\rm RF} = K_{\rm total}$, i.e., the number of equipped RF chains is equal to the total coordination set size. 

We explain the reason that the individual coordination sets $\CMcal{A}_m$, $m \in \CMcal{M}$ are formed first before making a total set $\CMcal{A}_{\rm total}$. Since each operator has different density and transmit power, directly forming a total coordination set by jointly considering multiple operators is complicated. For example, we should compute all the possibilities which operator's BS would be a member of the coordination set. When there are many operators, this causes too much analytical complexity. To avoid this, we form individual coordination sets $\CMcal{A}_m$ first and subsequently make $\CMcal{A}_{\rm total}$. As shown later, we are able to incorporate the effect of the BS coordination set size into a rate coverage expression with this way. 

We note that the considered inter-operator coordination assumes an ideal scenario that all the BSs in a network are able to join the coordination set. In practice, this can be restricted due to limited network connectivity. In this sense, our performance analysis indicates an upper-bound on the performance of inter-operator coordination.

\subsection{Channel Model}
In this subsection, we describe the assumptions used for modeling the channel.
\begin{enumerate}
\item We assume a single-path channel. Due to its tractability, this assumption was implicitly used in prior work that investigated the mmWave network performance, e.g., \cite{bai:twc:15, sarabjot:jsac:15, gupta:tcom:16, gupta:jsac:16}.
\item We consider Rayleigh fading as small-scale fading. In the prior work \cite{gupta:tcom:16, gupta:jsac:16}, it was shown that Rayleigh fading does not change the major performance trends compared to more general fading such as Nakagami fading. 
\item We assume that all the users are equipped with a single omni-directional antenna. This assumption was used in prior work \cite{gupta:tcom:16, gupta:jsac:16} for analytical simplicity. Although users are equipped with multiple antennas to obtain directivity gain in practice, the key system insights can be adequately obtained with the single-antenna assumption as shown in \cite{gupta:tcom:16, gupta:jsac:16}. At the expense of analytical simplicity, the multiple antenna user case can be incorporated into the analysis. 
Specifically, using multiple receive antennas, directional beamforming can be used not only at the BSs and but also at the users, so that the interference signal can have various directivity gain depending on its direction \cite{bai:twc:15}. This makes the Laplace transform of the interference complicated compared to a single receive antenna case. 
\item Each BS aligns the beam direction to its associated user, where each user is independently selected. For this reason, the angle-of-departure (AoD) from the BS at ${\bf{d}}_i^{(m)}$ ($i \neq i_a$) to the typical user, denoted as $\theta_i^{(m)}$, follows the independent uniform distribution. 
\item  We assume that $\rho_i^{(m)}$ indicates distance-dependent path-loss defined as $\rho_{i}^{(m)} = C_s \left\|{\bf{d}}_i^{(m)} \right\|^{-\alpha_s}$, the small-scale fading is captured in $\beta_{i}^{(m)}$, where $\left| \beta_{i}^{(m)}\right|^2 \sim \rm Exp(1)$ due to the Rayleigh fading assumption (the second assumption). 
\item ${\bf{a}}(\theta_i^{(m)}) \in \mathbb{C}^{N}$ is the array response vector corresponding to the AoD $\theta_i^{(m)}$.
\end{enumerate} 
With the enumerated assumptions, the channel vector between a BS at ${\bf{d}}_i^{(m)} $ and the typical user is written as 
\begin{align}
 {\bf{h}}_{i}^{(m)}  = \sqrt{\rho_i^{(m)}} \beta_i^{(m)} {\bf{a}}(\theta_i^{(m)}). 
\end{align}


\subsection{Signal Model}
We explain the interference mitigation process and the received signal at the typical user. Before proceeding, we assume that channel state information at transmitters (CSIT) required for the interference mitigation are known to the BSs perfectly. 
To do this, each user obtains (assumed perfect) CSI by using an existing channel estimation algorithm such as \cite{alk:jstsp:14, gao:cl:16}, and then sends the obtained CSI to the associated BS via a (assumed perfect) feedback link. In practice, there will be both estimation and feedback errors. For this reason, our assumption gives an upper-bound on the achievable gains from BS coordination. Incorporating imperfect CSIT is good topic for future work.

For explanation, we suppose a coordination set $\CMcal{A}_{\rm total}$ with $\left| \CMcal{A}_{\rm total}\right| = K_{\rm total}$. Including the typical user, there are $K_{\rm total}$ number of users to be served from the BSs in $\CMcal{A}_{\rm total}$. Then, a BS in $\CMcal{A}_{\rm total}$ can form a multi-user channel to $K_{\rm total}$ users. With this channel, the precoder design for cancelling the mutual interference is basically equivalent to the two-stage precoder presented in \cite{alk:twc:15, kulkarni:tcom:16}. 
Specifically, each BS in $\CMcal{A}_{\rm total}$ first makes its analog beamformer by matching with the AoDs to each user. For example, considering an arbitrary BS in $\CMcal{A}_{\rm total}$, it uses analog beamformer that ${\bf{A}} = \left[{\bf{a}}(\theta_1), {\bf{a}}( \theta_2), ..., {\bf{a}}( \theta_{K_{\rm total}}) \right] \in \mathbb{C}^{N \times N_{\rm RF}}$, where $\theta_1$ is the AoD to the associated user and $\theta_2, ..., \theta_{K_{\rm total}}$ are the AoDs to the other users in $\CMcal{A}_{\rm total}$. 
This is feasible since each BS is equipped with $N_{\rm RF} = K_{\rm total}$ number of RF chains. 
Denoting ${\bf{h}}_1 \in \mathbb{C}^{N}$ as the channel vector to the associated user and ${\bf{h}}_2, ..., {\bf{h}}_{K_{\rm total}} \in \mathbb{C}^N$ as the channel vectors to the other users, the effective channel after the precoding matrix ${\bf{A}}$ is written as
\begin{align}
{\bf{H}}^* {\bf{A}} &=  \left[\begin{array}{c} { {\bf{h}}_1^*} \\
{{\bf{h}}_2^*} \\ {\vdots} \\ {{\bf{h}}_{K_{\rm total}}^*} \end{array} \right] \cdot \left[{\bf{a}}(\theta_1), {\bf{a}}( \theta_2), ..., {\bf{a}}( \theta_{K_{\rm total}}) \right] 
=  \left[\begin{array}{c} {{\bf{g}}_1^*} \\
{{\bf{g}}_2^*} \\ {\vdots} \\ {{\bf{g}}_{K_{\rm total}}^*} \end{array} \right],
\end{align}
where $ {\bf{g}}_1^*$ is the desired effective channel vector corresponding to the associated user, while $ {\bf{g}}_k^*$ for $k = 2, ..., K_{\rm total}$ are the effective interfering channel vectors to other users. 
Now the BS designs its digital precoder ${\bf{v}}$ to satisfy the ZF criterion, i.e., ${\bf{g}}_2^*{\bf{v}} = 0$, ..., ${\bf{g}}_{K_{\rm total}}^*{\bf{v}} = 0$. Since $N_{\rm RF} = K_{\rm total}$, such a ${\bf{v}}$ can be found with high probability. 
Using ${\bf{v}}$, the interference to other users is completely removed.  
After multiplying with ${\bf{v}}$, the modified desired channel is written as ${\bf{g}}_1^*{\bf{v}} = \sqrt{\rho_1} \tilde G \tilde \beta$, where $\tilde G$ is the modified beamforming gain and $\tilde \beta$ is the modified fading after cancelling the interference. 
We note that $\tilde G < G$ where $G$ is the full beamforming gain since ZF decreases the beamforming gain. In general, $\tilde G$ is determined by the users' relative geometry in the coordination set. In this paper, we fix this intra-cluster users' geometry so that we treat $\tilde G$ as a deterministic variable. This is similar to the approach in \cite{lee:twc:15}. In this way, we observe how much BS coordination gain is obtained with specific intra-cluster geometry. 
One heuristic way to calculate $\tilde G$ is using simulations. For example, we drop BSs and users according to PPPs, make each BS's precoder, and then calculate the instantaneous value of $\tilde G$ for the typical user. By repeating this process, we can obtain the average value of $\tilde G$.
Analyzing $\tilde G$ rigorously is interesting future work.


For analytical tractability, we approximate the out-of-cluster channel's beamforming gain by using a sectored antenna model while neglecting the dependence on ZF. 
The same approximation is used in \cite{kulkarni:tcom:16}, which showed that this approximation is reasonable in a mmWave cellular network. 
The sectored antenna model was used in prior work \cite{bai:twc:15, renzo:twc:15, sarabjot:jsac:15, gupta:tcom:16, gupta:jsac:16} for simplifying the analysis while capturing the directivity gain. For example, we assume an out-of-cluster BS located at ${\bf{d}}_i^{(m)} \notin \CMcal{A}_{\rm total}$ and it uses analog beamformer ${\bf{a}}(\theta)$, which is the array response vector corresponding to the AoD $\theta$. Then, the directivity gain to the typical user is ${\bf{a}}^*(\theta_i^{(m)}) {\bf{a}}(\theta)$. Using the sectored antenna model, this directivity gain is approximated as 
\begin{align} \label{eq:sectored_model}
\left| {\bf{a}}^*(\theta_i^{(m)}) {\bf{a}}(\theta) \right| \approx G(\theta_i^{(m)}, \theta) = \left\{\begin{array}{ll} G = \frac{2\pi - (2\pi - \frac{2\pi}{N})\epsilon}{\frac{2\pi}{N}},& \text{if $\left| \theta_i^{(m)} - \theta \right|\ \le \frac{\pi}{N}$,}   \\ g = \epsilon,&   \text{if $\left| \theta_i^{(m)} - \theta \right| > \frac{\pi}{N}$.} \end{array} \right.
\end{align} 
In \eqref{eq:sectored_model}, $G$ and $g$ indicate the main-lobe gain and the side-lobe gain, respectively. We note that the values of $G$ and $g$ are adopted from \cite{shokri:tcom:15}. 
As mentioned in the previous section, we note that there can be quite a bit of scattering and dispersion in mmWave systems, especially for NLoS paths \cite{valenzuela:ctw:17}. For this reason, real channel environments can be far from the used sectored antenna model. Specifically, the actual SINR may be lower than the SINR obtained in the sectored antenna model. Nevertheless, we use the sectored antenna model in this paper for analytical tractability as in prior work \cite{bai:twc:15, gupta:tcom:16, kulkarni:tcom:16}. Incorporating more realistic channel models is future work.

With the sectored antenna model, we define $\tilde G = pG $, where $0 < p \le 1$ means the relative gain compared to the ideal main-lobe gain $G$. 
When $p$ is small, it means that the desired channel's directivity gain significantly decreases due to the interference mitigation. This occurs when there exist many other users within the main-lobe intended to the typical user. 
On the contrary, when $p$ is large, the desired user obtains the nearly ideal directivity gain. This can occur when the users are selected so that there is only one user within each main-lobe. Later, we derive a rate coverage probability as a function of $p$ and show how this affects the performance of BS coordination.

After cancelling the interference in the set $\CMcal{A}_{\rm total}$, the received signal at the typical user is 
\begin{align} \label{eq:rsig}
y =& \sqrt{P_1 C_s pG} \tilde \beta_{i_a}^{(1)} \left\|{\bf{d}}_{i_a}^{(1)} \right\|^{-\alpha_s/2} x_{i_a}^{(1)} + \sum_{m \in \CMcal{M}} \sum_{i \in \mathbb{N} \backslash \CMcal{A}_m}  \sqrt{P_m C_s G(\theta_{i}^{(m)})} \tilde \beta_{i}^{(m)} \left\|{\bf{d}}_{i}^{(m)} \right\|^{-\alpha_s/2} x_i^{(m)} + z,
\end{align}
where $P_m$ is the transmit power of operator $m$, $\tilde \beta_{i}^{(m)}$ is a modified channel coefficient where $\left|\tilde \beta_{i}^{(m)} \right|^2 \sim {\rm Exp}(1)$, $\sqrt{C_s} \left\| {\bf{d}}_i^{(m)} \right\|^{-\alpha_s/2}$ is link-state dependent path-loss, $x_i^{(m)}$ is an information symbol sent from the BS $i$ in operator $m$, $G(\theta_{i}^{(m)})$ is the approximated directivity gain of a out-of-cluster BS, and $z \sim \mathcal{CN}(0, \sigma^2)$ is additive white Gaussian noise.
With the received signal \eqref{eq:rsig}, we define the instantaneous SINR and the rate. We assume that the power of an information symbol is normalized as $\mathbb{E}\left[\left| x_i^{(m)} \right|^2\right] = 1$ for $i \in \mathbb{N}$ and $m \in \CMcal{M}$. Then the instantaneous SINR is given as 
\begin{align} \label{eq:sinr_ns}
{\rm SINR} = \frac{P_1 pG \left| \tilde \beta_{i_a}^{(1)} \right|^2 C_s \left\| {\bf{d}}_{i_a}^{(1)} \right\|^{-\alpha_s}}{\sigma^2 + \sum_{m \in \CMcal{M}} \sum_{i \in \mathbb{N} \backslash \CMcal{A}_m}  {P_m G(\theta_{i}^{(m)})} \left| \tilde \beta_{i}^{(m)} \right|^2 C_s\left\|{\bf{d}}_{i}^{(m)} \right\|^{-\alpha_s} },
\end{align}
where the noise power $\sigma^2 = N_0 \sum_{m \in \CMcal{M}} W_m$ with $N_0 = -174 {\rm dBm/Hz}$. 
The rate coverage probability is define as 
\begin{align} \label{eq:rcov}
R_{\rm cov} ( \gamma) = \mathbb{P} \left[\left( \sum_{m \in \CMcal{M}} W_m \right) \log_2 \left(1 + {\rm SINR} \right) > \gamma\right],
\end{align}
where $\gamma $ is the rate threshold. 


\section{Performance Characterization}

In this section, we analyze the rate coverage performance of coordinated spectrum sharing. 
A key ingredient in performance characterization is the distribution of the link power in the coordination set. For example, assuming that a coordination set size is $K_m$ in operator $m$, we need to characterize the $K$-th strongest link power $C_{s}\left\| {\bf{d}}_{i_{K}}^{(m)} \right\|^{-\alpha_{s}}$. This is because the interference comes outside of the coordination set, so the link power $C_{s}\left\| {\bf{d}}_{i_{K}}^{(m)} \right\|^{-\alpha_{s}}$ serves as a  protection boundary of the typical user. 
Unfortunately, it is not straightforward to obtain the $K$-th strongest link power distribution in a mmWave network. 
The main reason is that any link has one of the two states (LoS or NLoS) whereby the link power is determined by not only the distance, but also the link state. For this reason, the BS ordering based on the distance may not be equivalent to the BS ordering based on its link power. For instance, the $K$-th closest BS may be equal to the $K$-th strongest BS; so that the distribution of the $K$-th closest BS's distance \cite{haenggi:tit:05} cannot be applied in mmWave networks. To resolve this, we first find the $K$-th strongest link power distribution. Leveraging the derived distributions, we obtain the rate coverage probability subsequently. 

\subsection{Link Power Distribution}
 
In this subsection, we obtain the link power distributions required for charactering the rate performance. Specifically, we obtain two PDFs: the PDF of the $K$-th strongest link power and the joint PDF of the strongest and the $K$-th strongest link power. The following lemma is the first main result of this subsection. 

\begin{lemma} \label{lem:PDF_K}
Assume a generic mmWave network $\Phi$ with density $\lambda$ and the LoS probability $p(r)$. The PDF of the $K$-strongest BS's link power is 
\begin{align} \label{eq:pdf_K}
f_{T_K}(t) = e^{-\Lambda_{\rm L}(t) } e^{-\Lambda_{\rm N}(t)}  \left(-\Lambda_{\rm L}'(t) - \Lambda_{\rm N}'(t) \right) \frac{\left(\Lambda_{\rm L}(t) + \Lambda_{\rm N}(t) \right)^{K-1}}{(K-1)!},
\end{align}
where $\Lambda_{\rm L}(t)$ and $\Lambda_{\rm N}(t)$ is the intensity measures of the LoS and NLoS BS defined as
\begin{align}
\Lambda_{\rm L}(t) = &  2\pi \lambda \int_{0}^{\left(\frac{C_{\rm L}}{t} \right)^{\frac{1}{\alpha_{\rm L}}}} p(x) x {\rm d}x,
\end{align}
and
\begin{align}
\Lambda_{\rm N}(t) =  2\pi \lambda \int_{0}^{\left(\frac{C_{\rm N}}{t} \right)^{\frac{1}{\alpha_{\rm N}}}} (1-p(x)) x {\rm d}x.
\end{align}
\end{lemma}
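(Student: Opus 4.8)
The plan is to move the problem onto the one-dimensional ``link-power axis.'' For the generic network $\Phi$, introduce the count $N(t) := \#\{\, i \in \mathbb{N} : C_{s_i}\|{\bf d}_i\|^{-\alpha_{s_i}} > t \,\}$, the number of BSs whose link power exceeds the level $t$, where $s_i \in \{{\rm L},{\rm N}\}$ is the (random, independent across $i$) state of BS $i$. The point of this reformulation is that, unlike in a single-state network, the link-power ordering is \emph{not} the distance ordering, so the classical $K$-th nearest distance distribution cannot be applied directly; but the event of interest still has the clean description $\{T_K \le t\} = \{N(t) \le K-1\}$, which holds a.s.\ since distinct BSs yield distinct link powers with probability one. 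Hence I would first establish the law of $N(t)$ and then read off the CDF, and finally the PDF, of $T_K$.

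For the law of $N(t)$, split according to state: $N(t) = N_{\rm L}(t) + N_{\rm N}(t)$, where $N_{\rm L}(t)$ counts the LoS BSs with $C_{\rm L}\|{\bf d}_i\|^{-\alpha_{\rm L}} > t$, i.e.\ the LoS BSs inside the disk of radius $(C_{\rm L}/t)^{1/\alpha_{\rm L}}$, and likewise $N_{\rm N}(t)$ counts the NLoS BSs inside the disk of radius $(C_{\rm N}/t)^{1/\alpha_{\rm N}}$. By the independent-thinning property of a PPP (retention probability $p(r)$ at distance $r$), the LoS-retained BSs form a PPP with intensity $\lambda p(\|{\bf x}\|)$ and the NLoS-retained BSs an independent PPP with intensity $\lambda(1-p(\|{\bf x}\|))$; consequently $N_{\rm L}(t)$ and $N_{\rm N}(t)$ are independent Poisson variables with means $\Lambda_{\rm L}(t) = 2\pi\lambda\int_0^{(C_{\rm L}/t)^{1/\alpha_{\rm L}}} p(x)\,x\,{\rm d}x$ and $\Lambda_{\rm N}(t) = 2\pi\lambda\int_0^{(C_{\rm N}/t)^{1/\alpha_{\rm N}}} (1-p(x))\,x\,{\rm d}x$ respectively (each finite for $t>0$ and tending to $+\infty$ as $t\downarrow 0$, so $T_K$ is a.s.\ finite and positive). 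Therefore $N(t) \sim {\rm Poisson}\big(\Lambda(t)\big)$ with $\Lambda(t) := \Lambda_{\rm L}(t) + \Lambda_{\rm N}(t)$; equivalently, by the mapping theorem the link powers themselves form a PPP on $(0,\infty)$ with intensity measure $\Lambda(\cdot)$.

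The CDF then follows immediately: $F_{T_K}(t) = \mathbb{P}(N(t) \le K-1) = \sum_{j=0}^{K-1} e^{-\Lambda(t)}\Lambda(t)^j/j!$. Differentiating term by term and using $\tfrac{{\rm d}}{{\rm d}t}\big(e^{-\Lambda}\Lambda^{j}/j!\big) = \Lambda'(t)\,e^{-\Lambda}\big(\Lambda^{j-1}/(j-1)! - \Lambda^{j}/j!\big)$, the resulting sum telescopes and every term cancels except the last, leaving $f_{T_K}(t) = -\Lambda'(t)\,e^{-\Lambda(t)}\,\Lambda(t)^{K-1}/(K-1)!$; substituting $\Lambda = \Lambda_{\rm L} + \Lambda_{\rm N}$, $\Lambda' = \Lambda_{\rm L}' + \Lambda_{\rm N}'$ and $e^{-\Lambda} = e^{-\Lambda_{\rm L}}e^{-\Lambda_{\rm N}}$ gives \eqref{eq:pdf_K} (note $\Lambda'(t)<0$, so the density is nonnegative, and $\Lambda_{\rm L}', \Lambda_{\rm N}'$ are obtained by differentiating the upper limits of the defining integrals).

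The one step that genuinely needs care is the passage to a single Poisson count in the presence of two link states: one must verify that ``LoS BSs in one disk'' and ``NLoS BSs in another, differently sized, disk'' yield \emph{independent} Poisson counts, which rests on the fact that the LoS and NLoS thinnings of a PPP are themselves independent PPPs (together with the disjointness $\Phi_{\rm L}\cap\Phi_{\rm N} = \emptyset$). Everything after that — the identity $\{T_K\le t\}=\{N(t)\le K-1\}$, which also quietly uses that ties occur with probability zero, and the telescoping differentiation — is routine, and one can sanity-check that the result collapses to the single-state $K$-th nearest link-power density of \cite{lee:twc:15} when either $\Lambda_{\rm L}$ or $\Lambda_{\rm N}$ is set to zero.
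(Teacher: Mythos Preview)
Your argument is correct and follows the same overall strategy as the paper: express $\{T_K \le t\}$ as the event that at most $K-1$ BSs have link power exceeding $t$, use independent thinning to obtain Poisson counts with means $\Lambda_{\rm L}(t)$ and $\Lambda_{\rm N}(t)$, write the CDF, and differentiate. The one substantive difference is purely computational: the paper keeps the LoS and NLoS counts separate, writes the CDF as the double sum $e^{-\Lambda_{\rm L}(t)-\Lambda_{\rm N}(t)}\sum_{k=0}^{K-1}\sum_{\ell=0}^{K-k-1}\frac{\Lambda_{\rm L}(t)^k}{k!}\frac{\Lambda_{\rm N}(t)^\ell}{\ell!}$, differentiates to obtain four terms, and then collapses them pairwise. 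You instead observe up front that $N(t)=N_{\rm L}(t)+N_{\rm N}(t)$ is itself Poisson with mean $\Lambda(t)=\Lambda_{\rm L}(t)+\Lambda_{\rm N}(t)$ (sum of independent Poissons), which reduces the CDF to a single sum and the differentiation to a one-line telescope. Your route is shorter and arguably more transparent; the paper's route, by keeping the two states visible until the end, makes the separate contributions of $\Lambda_{\rm L}'$ and $\Lambda_{\rm N}'$ explicit along the way, which is convenient for the LoS/NLoS decomposition exploited later in Example~2.
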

\begin{proof}
See Appendix \ref{appen:lem1}.
\end{proof}


\begin{definition} \label{def:pdf_K} \normalfont 
For convenience, we define the PDF in \eqref{eq:pdf_K} corresponding to operator $m$ as $f^{(m)}(t)$, obtained by substituting $\lambda \leftarrow \lambda_m$ and $K \leftarrow K_m $ in a generic function \eqref{eq:pdf_K}. 
\end{definition}

\begin{figure}[!t]
\centerline{\resizebox{0.64\columnwidth}{!}{\includegraphics{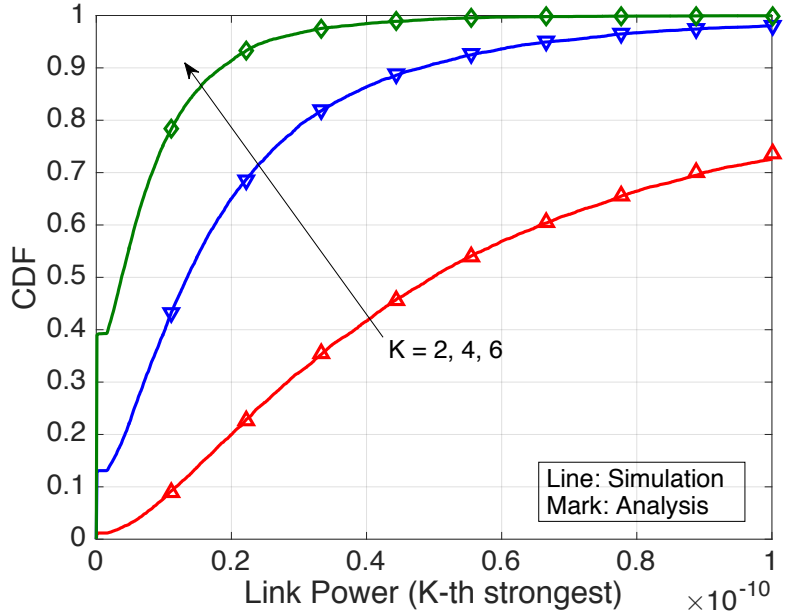}}}     
\caption{The comparison between the analytical CDF \eqref{eq:cdf_K_main} vs. the CDF obtained by numerical simulations. The assumed system parameters are as follows: $\alpha_{\rm L} = 2$, $\alpha_{\rm N} = 4$, $C_{\rm L} = -60{\rm dB}$, $C_{\rm N} = -70 {\rm dB}$, $\lambda = 5 \times 10^{-5}$, and $\mu = 144$.}
 \label{fig:K_cdf_verify}
\end{figure}

We note that Lemma \ref{lem:PDF_K} is a key ingredient to characterize the performance of BS coordination in mmWave cellular networks. We verify the obtained PDF by comparing the numerically obtained CDF in Fig.~\ref{fig:K_cdf_verify}. The analytical CDF is obtained by using Lemma \ref{lem:PDF_K} as follows
\begin{align} \label{eq:cdf_K_main}
F_{T_K}(t) = \int_{T_K = 0}^{t} f_{T_K}(t) {\rm d} t.
\end{align}
The system parameters are presented in the caption of the figure. As shown in Fig.~\ref{fig:K_cdf_verify}, the analytical results are perfectly matched with the simulation results, guaranteeing that our derivation is correct. 

The obtained PDF \eqref{eq:pdf_K} is general in the sense that it is applicable in different network scenarios, e.g., a sub-6 GHz cellular network or a non-cooperative mmWave network. Specifically, PDF \eqref{eq:pdf_K} reduces to the existing PDF by simplifying the network conditions. We study this in the following examples. 

\begin{example} \label{example:1} \normalfont
Setting $C_{\rm L} = C_{\rm N} = C$ and $\alpha_{\rm L} = \alpha_{\rm N} = \alpha$, the corresponding network boils down to a conventional sub-6 GHz cellular network, where only a single link state exists. In this case, the $K$-th strongest BS is equal to the $K$-th closest BS. Further, since there is a single link state, the state transition probability $p(r)$ has no meaning. For simplicity, we set $p(r) = 1/2$. Then, the intensity measure $\Lambda_{\rm L}(t) = \Lambda_{\rm N}(t) = \Lambda(t) = {\pi \lambda}/{2} \left(t/C \right)^{-{2}/{\alpha}} $. Accordingly, the PDF $f_{T_K}(t)$ \eqref{eq:pdf_K} is 
\begin{align} \label{eq:k_strong_k_close}
f_{T_K}(t) =& e^{-\lambda \pi \left(\frac{t}{C} \right)^{-\frac{2}{\alpha}}} \left(\frac{\pi \lambda}{2}  \left(\frac{t}{C} \right)^{-\frac{2}{\alpha}}\right)^{K-1} \sum_{k=0}^{K-1} \frac{1}{k! (K-k-1)!} \left( \frac{2\pi \lambda}{\alpha C} \left(\frac{t}{C} \right)^{-1-\frac{2}{\alpha}}  \right) \nonumber \\
\mathop{=}^{}& e^{-\lambda \pi \left(\frac{t}{C} \right)^{-\frac{2}{\alpha}}} \left(\frac{\pi \lambda}{2}  \left(\frac{t}{C} \right)^{-\frac{2}{\alpha}}\right)^{K-1} \sum_{k=0}^{K-1} \frac{(K-1)!}{k! (K-k-1)!} \left( \frac{2\pi \lambda}{\alpha C} \left(\frac{t}{C} \right)^{-1-\frac{2}{\alpha}}  \right)  \frac{1}{(K-1)!} \nonumber \\
\mathop{=}^{(a)}& e^{-\lambda \pi \left(\frac{t}{C} \right)^{-\frac{2}{\alpha}}} \left(\frac{\pi \lambda}{2}  \left(\frac{t}{C} \right)^{-\frac{2}{\alpha}}\right)^{K-1} 2^{K-1} \left( \frac{2\pi \lambda}{\alpha t} \left(\frac{C}{t} \right)^{\frac{2}{\alpha}}  \right)  \frac{1}{\Gamma(K)},
\end{align}
where (a) comes from a binomial series. 
Now let us substitute $t = C r^{-\alpha}$, then $F_{T_K}(C r^{-\alpha}) = \mathbb{P}\left[T_K < C r^{-\alpha} \right]$. Since the $K$-th strongest BS's link power is presented as $C R_K^{-\alpha}$, $F_{T_K}(C r^{-\alpha})$ is interpreted as the CCDF of the $K$-th closest distance. Thus, the PDF of the $K$-th closest distance is obtained as
\begin{align} \label{eq:bridge}
f_{R_K}(r) = -\frac{\partial F_{T_K}(C r^{-\alpha})}{\partial r} = \alpha C r^{-1-\alpha} f_{T_K}(C r^{-\alpha}).
\end{align}
From \eqref{eq:k_strong_k_close}, we obtain 
\begin{align}
f_{R_K}(r) = \frac{ 2 \left({\pi \lambda} r^2 \right)^{K}}{r \Gamma(K)} e^{-\lambda \pi r^2}.
\end{align}
This is equal to the previous result \cite{haenggi:tit:05}.
\end{example}

\begin{example} \normalfont
Assuming $K=1$, the PDF \eqref{eq:pdf_K} is reduced to the PDF of the association link's power in a mmWave network. In this case, we observe that the corresponding PDF consists of two parts as follows 
\begin{align} \label{eq:}
f_{T_1}(t) = \underbrace{-\Lambda_{\rm L}'(t) e^{-\Lambda_{\rm L}(t) } e^{-\Lambda_{\rm N}(t)}}_{(a)}  \underbrace{- \Lambda_{\rm N}'(t)  e^{-\Lambda_{\rm L}(t) } e^{-\Lambda_{\rm N}(t)}}_{(b)}.
\end{align}
We now denote that $f_{T_1, {\rm L}}(t)$ ($ f_{T_1, {\rm N}}(t)$) as the PDF of the association link's power when the typical user is associated with a LoS (NLoS) BS. 
By showing that $f_{T_1, {\rm L}}(t) = (a)$ and $f_{T_1, {\rm N}}(t) = (b)$, we claim that the obtained PDF \eqref{eq:pdf_K} incorporates both cases of LoS and NLoS association. The PDF $f_{T_1, {\rm L}}(t)$ is obtained as follows
\begin{align}
f_{T_1, L}(t) = \lim_{{\rm d}t \rightarrow 0} \frac{\mathbb{P}\left[\{N_{\rm L}(t - {\rm d}t, t) = 1\} \cap \{N_{\rm L}(t, \infty) = 0\} \cap \{N_{\rm N}(t - {\rm d}t, \infty) = 0\} \right]}{{\rm d} t}.
\end{align}
Due to the independence of a PPP, we write
\begin{align} \label{eq:example2_derv}
f_{T_1, L}(t) =&  \lim_{{\rm d}t \rightarrow 0} \frac{\mathbb{P}\left[N_{\rm L}(t - {\rm d}t, t) = 1\right] \cdot \mathbb{P}\left[ N_{\rm L}(t, \infty) = 0\right] \cdot \mathbb{P}\left[ N_{\rm N}(t - {\rm d}t, \infty) = 0 \right]}{{\rm d} t} \nonumber \\
&= \lim_{{\rm d}t \rightarrow 0} -\frac{   \Lambda_{\rm L}(t)- \Lambda_{\rm L}(t - {\rm d}t)}{{\rm d} t} e^{-(\Lambda_{\rm L}(t-{\rm d} t) - \Lambda_{\rm L}(t)) } \cdot e^{-\Lambda_{\rm L}(t)} e^{-\Lambda_{\rm N}(t - {\rm d}t)} \nonumber \\
& \mathop{=}^{(c)} -\Lambda'_{\rm L}(t) e^{-\Lambda_{\rm L}(t)} e^{-\Lambda_{\rm N}(t )} = (a),
\end{align}
where (c) follows the definition of differentiation. Similarly, we also can show that $f_{\rm N}(t) = (b)$. 
Next, we assume $t = C_{\rm L}r^{-\alpha_{\rm L}}$ in \eqref{eq:example2_derv}. Then, we have
\begin{align}
f_{T_1, {\rm L}}(C_{\rm L}t^{-\alpha_{\rm L}}) = \frac{1}{\alpha C_{\rm L}} r^{1+\alpha_{\rm L}} 2 \pi \lambda p(r) r e^{-2\pi \lambda \int_{0}^{r} p(x) x {\rm d} x} e^{-2 \pi \lambda \int_{0}^{\left(\frac{C_{\rm N}}{C_{\rm L}} r^{\alpha_{\rm L}} \right)^{\frac{1}{\alpha_{\rm N}} }} (1-p(x)) x {\rm d} x}.
\end{align}
Since $f_{R_1, {\rm L}}(r) = \alpha_{\rm L} C_{\rm L} r^{-1-\alpha_{\rm L}}$ as shown in \eqref{eq:bridge}, the PDF of the association BS's distance when the typical user is associated with a LoS BS is 
\begin{align}
f_{R_1, {\rm L}}(r) = 2 \pi \lambda p(r) r e^{-2\pi \lambda \int_{0}^{r} p(x) x {\rm d} x} e^{-2 \pi \lambda \int_{0}^{\left(\frac{C_{\rm N}}{C_{\rm L}} r^{\alpha_{\rm L}} \right)^{\frac{1}{\alpha_{\rm N}} }} (1-p(x)) x {\rm d} x},
\end{align}
which is equivalent with the result derived in \cite{bai:twc:15}. 
\end{example}

\begin{example} \label{example:LoSratio}\normalfont
In this example, we study the average ratio of LoS and the NLoS BSs in the coordination set $\CMcal{A}$. Without loss of generality, we assume that $\left| \CMcal{A} \right| = K$ and denote a set of LoS BSs in $\CMcal{A}$ as $\CMcal{A}^{\rm L}$, and a set of the NLoS BSs in $\CMcal{A}$ as $\CMcal{A}^{\rm N}$. We note that $\CMcal{A} = \CMcal{A}^{\rm L} \cup \CMcal{A}^{\rm N}$ and $\CMcal{A}^{\rm L} \cap \CMcal{A}^{\rm N} = \emptyset$. 
For the characterization of $\left| \CMcal{A}^{\rm L}\right|$, we first obtain the probability of $\left| \CMcal{A}^{\rm L}\right| = k \le K$ conditioned on that the $(K+1)$-th strongest link power is $t$, i.e., $T_{K+1} = t$. This conditional probability is 
\begin{align}
\mathbb{P}\left[\left. \left|\CMcal{A}^{\rm L} \right| = k\right| T_{K+1} = t \right] 
\mathop{=}^{(a)} \frac{k! (K-k)!}{K!}\left(\frac{\Lambda_{\rm L}(t)}{\Lambda_{\rm L}(t) + \Lambda_{\rm N}(t)} \right)^k \left(\frac{\Lambda_{\rm N}(t)}{\Lambda_{\rm L}(t) + \Lambda_{\rm N}(t)} \right)^{(K-k)},
\end{align}
where (a) comes from that, in a PPP, the number of points inside a certain window follow the multinomial distribution if the total number in the corresponding window is given. Specifically, 
when $T_{K+1} = t$, there exist exactly $K$ points in the window $\left(t, \infty\right]$. Then, the number of BSs whose link power is in $\left(t, \infty\right]$ follows the multinomial distribution with the corresponding probability $\Lambda_{\rm L}(t)/\left(\Lambda_{\rm L}(t) + \Lambda_{\rm N}(t) \right)$. Similar to this, the number of NLoS BSs whose link power is in $\left(t, \infty\right]$ also follows the multinomial distribution with the corresponding probability $\Lambda_{\rm N}(t)/\left(\Lambda_{\rm L}(t) + \Lambda_{\rm N}(t) \right)$. Then, by using the average of a multinomial random variable, the conditional average of $\left| \CMcal{A}^{\rm L}\right|$ is obtained as follows 
\begin{align}
\mathbb{E}\left[\left. \left| \CMcal{A}^{\rm L}\right| \right| T_{K+1} = t\right] = K \cdot \frac{\Lambda_{\rm L}(t)}{\Lambda_{\rm L}(t) + \Lambda_{\rm N}(t)}.
\end{align}
Marginalizing for $T_{K+1}$, the average number of LoS BSs in $\CMcal{A}$ is
\begin{align}
\mathbb{E}\left[\left| \CMcal{A}^{\rm L} \right| \right] = \int_{0}^{\infty} \frac{K \Lambda_{\rm L}(t)}{\Lambda_{\rm L}(t) + \Lambda_{\rm N}(t)} f_{T_{K+1}}(t) {\rm d} t.
\end{align}
The average ratio of LoS BSs in $\CMcal{A}$ is thereby $\mathbb{E}\left[\left| \CMcal{A}^{\rm L} \right| \right]/K$ and the average ratio of NLoS BSs in $\CMcal{A}$ is $1-\mathbb{E}\left[\left| \CMcal{A}^{\rm L} \right| \right]/K$. Later, we show that how the ratio is changed as the coordination set size increases. 
\end{example}

Next, we obtain the joint PDF of the strongest and the $K$-th strongest BS's link power, which is the second main result in this subsection. 

\begin{lemma} \label{lem:jointPDF_1_K}
The joint PDF of the strongest BS's link power $T_1$ and the $K$-th strongest BS's link power $T_K$ is 
\begin{align} \label{eq:jointpdf_final}
& f_{T_1, T_K}(t_1, t_K) \nonumber\\
=&  \left(\begin{array}{c}{ \Lambda'_{\rm L}(t_K) \Lambda'_{\rm L}(t_1) + \Lambda'_{\rm L}(t_K) \Lambda'_{\rm L}(t_1) } \\{+ \Lambda'_{\rm L}(t_K) \Lambda'_{\rm L}(t_1) + \Lambda'_{\rm L}(t_K) \Lambda'_{\rm L}(t_1)} \end{array} \right) \cdot e^{-\Lambda_{\rm L}(t_K)} e^{-\Lambda_{\rm N}(t_K)} \frac{\left(\Lambda_{\rm L}(t_K, t_1) + \Lambda_{\rm N}(t_K, t_1)\right)^{(K-2)}}{(K-2)!},
\end{align}
if $t_K < t_1$, while $f_{T_1, T_K}(t_1, t_K) = 0$ otherwise. The differential intensity measure $\Lambda_{\rm L}(t_K, t_1)$ is defined as
\begin{align} \label{eq:diff_measure}
\Lambda_{\rm L}(t_K, t_1) =& \Lambda_{\rm L}(t_K) - \Lambda_{\rm L}(t_1).
\end{align}
\end{lemma}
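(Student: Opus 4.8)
The plan is to reduce the statement to a fact about the top order statistics of an inhomogeneous Poisson point process on the half-line, and then to read off the joint density by the same infinitesimal-neighborhood computation used in the proof of Lemma~\ref{lem:PDF_K} and in \eqref{eq:example2_derv}. First I would mark each point of $\Phi$ at distance $r$ by its link state, independently LoS with probability $p(r)$ and NLoS otherwise; by the marking theorem the LoS points and the NLoS points form independent PPPs. Mapping the LoS points through $r \mapsto C_{\rm L} r^{-\alpha_{\rm L}}$ and the NLoS points through $r \mapsto C_{\rm N} r^{-\alpha_{\rm N}}$ (each a continuous bijection of $(0,\infty)$) and superposing the two images, the mapping theorem produces a PPP on $(0,\infty)$ whose atoms are precisely the link powers $T_1 > T_2 > \cdots$, with mean measure of $(t,\infty)$ equal to $\Lambda_{\rm L}(t) + \Lambda_{\rm N}(t)$ — the LoS branch contributing $\Lambda_{\rm L}(t)$ and the NLoS branch $\Lambda_{\rm N}(t)$. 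Since $\Lambda_{\rm L}(t) + \Lambda_{\rm N}(t)$ is finite for $t > 0$, strictly decreasing with derivative $\Lambda'_{\rm L}(t) + \Lambda'_{\rm N}(t) < 0$, and diverges as $t \downarrow 0$ (because $1-p(x)\to 1$), every $T_K$ is almost surely well-defined; we take $K \ge 2$.

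Next I would compute the joint density directly. Fix $t_K < t_1$ and infinitesimals ${\rm d}t_1, {\rm d}t_K > 0$. Up to higher-order terms, the event $\{T_1 \in [t_1, t_1 + {\rm d}t_1),\, T_K \in [t_K, t_K + {\rm d}t_K)\}$ is the intersection of four events on disjoint intervals: no atom in $(t_1, \infty)$, exactly one atom in $[t_1, t_1 + {\rm d}t_1)$, exactly $K-2$ atoms in $(t_K, t_1)$, and exactly one atom in $[t_K, t_K + {\rm d}t_K)$. By independence of the PPP over disjoint sets and the Poisson law of the atom counts, the probabilities of these four events are, to leading order, $e^{-\Lambda_{\rm L}(t_1) - \Lambda_{\rm N}(t_1)}$, $\bigl(-\Lambda'_{\rm L}(t_1) - \Lambda'_{\rm N}(t_1)\bigr){\rm d}t_1$, $e^{-\Lambda_{\rm L}(t_K,t_1) - \Lambda_{\rm N}(t_K,t_1)} \bigl(\Lambda_{\rm L}(t_K,t_1) + \Lambda_{\rm N}(t_K,t_1)\bigr)^{K-2}/(K-2)!$, and $\bigl(-\Lambda'_{\rm L}(t_K) - \Lambda'_{\rm N}(t_K)\bigr){\rm d}t_K$, where I write $\Lambda_s(t_K,t_1) = \Lambda_s(t_K) - \Lambda_s(t_1)$ following \eqref{eq:diff_measure}. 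Multiplying, dividing by ${\rm d}t_1 {\rm d}t_K$, and letting ${\rm d}t_1, {\rm d}t_K \to 0$ gives $f_{T_1,T_K}(t_1,t_K)$; the factor $e^{\Lambda_{\rm L}(t_1) + \Lambda_{\rm N}(t_1)}$ sitting inside the third probability cancels the first exponential, leaving the $e^{-\Lambda_{\rm L}(t_K)} e^{-\Lambda_{\rm N}(t_K)}$ of \eqref{eq:jointpdf_final}. When $t_K \ge t_1$ the event is impossible, so the density vanishes there.

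Finally, expanding $\bigl(-\Lambda'_{\rm L}(t_1) - \Lambda'_{\rm N}(t_1)\bigr)\bigl(-\Lambda'_{\rm L}(t_K) - \Lambda'_{\rm N}(t_K)\bigr)$ into its four summands $\Lambda'_{s_1}(t_1)\Lambda'_{s_K}(t_K)$ over $s_1, s_K \in \{{\rm L}, {\rm N}\}$ — which enumerate whether the strongest and the $K$-th strongest BS are LoS or NLoS — recovers the prefactor displayed in \eqref{eq:jointpdf_final}. I expect the main obstacle to be making the limiting step rigorous: one must verify that everything dropped in the leading-order reduction — two or more atoms in a window of width ${\rm d}t$, or a window straddling $t_1$ or $t_K$ containing the wrong count — contributes $o({\rm d}t_1 {\rm d}t_K)$, which follows from the void probability of a PPP and the local boundedness of the intensity $-\Lambda'_{\rm L} - \Lambda'_{\rm N}$ away from $0$. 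The bookkeeping in the mapping step also needs care, since the two state branches use different path-loss exponents and intercepts and must be sent to the correct preimage intervals before superposition.
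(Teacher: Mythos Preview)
Your proof is correct and takes a genuinely different route from the paper. The paper computes the joint CDF $\mathbb{P}[T_1 < t_1,\, T_K < t_K]$ as a double sum of Poisson probabilities over the counts $N_{\rm L}(t_K,t_1)$ and $N_{\rm N}(t_K,t_1)$, then differentiates twice in $t_K$ and $t_1$; each differentiation produces four terms and the resulting expressions are simplified by telescoping the inner sums, eventually collapsing to the binomial $(\Lambda_{\rm L}(t_K,t_1)+\Lambda_{\rm N}(t_K,t_1))^{K-2}/(K-2)!$. Your infinitesimal-window argument goes straight to the density by partitioning $(t_K,\infty)$ into four disjoint pieces and invoking independence of the mapped PPP across them, so the binomial factor and the exponential cancellation appear in one step rather than emerging from a page of algebra. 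The paper's approach has the virtue of being purely mechanical once the CDF is written down, while yours is shorter and makes the order-statistics structure transparent; both rely on the same reduction to a one-dimensional PPP of link powers, which the paper leaves implicit but you state explicitly via the marking and mapping theorems.
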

\begin{proof}
See Appendix \ref{appen:lem2}. 
\end{proof}

\begin{definition} \label{def:K_1}\normalfont
Similar to Definition \ref{def:pdf_K}, we define the joint PDF corresponding to operator $m$ as $f^{(m)}_{\rm Joint}(t_1, t_{K_m})$, obtained by substituting $\lambda \leftarrow \lambda_m$ and $K \leftarrow K_m$ in a generic function \eqref{eq:jointpdf_final}. 
\end{definition}

\begin{remark} \normalfont
Lemma \ref{lem:jointPDF_1_K} is required to calculate the interference statistic when BS coordination is applied in operator $1$, i.e., $K_1 > 1$. For operator $m \in \CMcal{M}\backslash 1$, only the PDF of the $K$-th strongest link power is needed, while the joint PDF of the strongest and the $K$-th strongest link power is required for operator $1$ since the typical user receives the desired data from the strongest BS in operator $1$. 
\end{remark}

\begin{example} \normalfont
As in Example \ref{example:1}, we show that the obtained joint PDF \eqref{eq:jointpdf_final} reduces to the previous result in a conventional sub-6 GHz cellular network setting. We assume that $C_{\rm L} =  C_{\rm N} = C$ and $\alpha_{\rm L} = \alpha_{\rm N} = \alpha$, so that links have a single state path-loss. Since the state transition probability $p(x)$ has no meaning in this setting, we set $p(x) = 1$ for simplicity. Under this assumption, the obtained joint PDF \eqref{eq:jointpdf_final} is
\begin{align} \label{eq:relation_tr}
f_{T_1, T_K}(t_1, t_K) = \left(\frac{2 \pi \lambda}{\alpha}\right)^{2}  \left(\frac{C^2}{t_1t_K} \right)^{\frac{2}{\alpha}+1}  e^{-\pi \lambda \left(\frac{t_K}{C }\right)^{-2/\alpha}} \frac{\left(\pi \lambda\left( \left(t_K/C \right)^{-2/\alpha} - \left(t_1 / C \right)^{-2\alpha}\right) \right)^{K-2}}{(K-2)!}.
\end{align}
Due to the similar reason presented in Example \ref{example:1}, the joint PDF $f_{R_1, R_K}(r_1, r_K)$, where $R_1$ is the closest BS's distance and $R_K$ is the $K$-closest BS's distance, is obtained as
\begin{align}
f_{R_1, R_K}(r_1, r_K) = (\alpha C)^2 r_1^{-1-\alpha} r_K^{-1-\alpha} \cdot f_{T_1, T_K}(Cr_1^{-\alpha}, Cr_K^{-\alpha}).
\end{align}
Plugging $t_1 = C r_1^{-\alpha}$ and $t_K = C r_K^{-\alpha}$ into $f_{T_1, T_K}(t_1, t_K)$, we have
\begin{align}
f_{T_1, T_K}(C r_1^{-\alpha},C r_k^{-\alpha}) = \left(\frac{2}{\alpha}\right)^{2} \left(\pi \lambda \right)^2 \left(r_K \right)^{2+\alpha} \left(r_1 \right)^{2+\alpha} e^{-\pi \lambda r_K^2} \frac{\left(\pi \lambda\left( r_K^2 - r_1^2\right) \right)^{K-2}}{(K-2)!}.
\end{align}
By using \eqref{eq:relation_tr}, the joint PDF $f_{R_1, R_K}(r_1, r_K)$ is
\begin{align}
f_{R_1, R_K}(r_1, r_K) = 4 \left(\pi \lambda \right)^K  r_1 r_K e^{-\pi \lambda r_K^2} \frac{\left( r_K^2 - r_1^2\right)^{K-2}}{(K-2)!},
\end{align}
if $r_1 \le r_K$ and $f_{R_1, R_K}(r_1, r_K) = 0$ if $r_1 > r_K$. This is exactly same with the previous result obtained in \cite{lee:twc:15}, 
\begin{align}
f_{R_1, R_K}(r_1, r_K) = \left\{\begin{array}{lc} {\frac{4 (\lambda \pi)^K}{(K-2)!} r_1 r_K (r_K^2 - r_1^2)^{K-2} e^{-\lambda \pi r_K^2}}, & {{\rm if} \; r_1 \le r_K },\\ {0}, & {{\rm otherwise}}. \end{array} \right.
\end{align}
\end{example}

\subsection{Rate Coverage Analysis}
In this subsection, we leverage the obtained PDFs to calculate the rate coverage probability. First, we characterize the Laplace transform of the interference coming from operator $m \in \CMcal{M}\backslash 1$ in the following lemma.
\begin{lemma} \label{lem:laplace_int}
We denote the interference coming from operator $m$ for $m \in \CMcal{M} \backslash 1$ as
\begin{align}
I_m =   \sum_{i \in \mathbb{N} \backslash \CMcal{A}_m}  {P_m G(\theta_{i}^{(m)})} \left| \tilde \beta_i^{(m)}\right|^2 C_s\left\|{\bf{d}}_{i}^{(m)} \right\|^{-\alpha_s}.
\end{align}
The Laplace transform of $I_m$ is 
\begin{align}
&\CMcal{L}_{I_m}(s) = \int_{T = 0}^{\infty} \exp\left(- \left[\int_{T}^{0} \frac{sP_m G t}{1 +sP_m Gt } \tilde \Lambda^{m} ({\rm d} t) + \int_{T}^{0} \frac{sP_m gt}{1 +sP_m gt } \tilde {\tilde \Lambda}^{(m)} ({\rm d} t) \right]  \right) f_{T_{K_m}}^{(m)}(T) {\rm d} T,
\end{align}
where 
\begin{align}
&\tilde \Lambda^{(m)}({\rm d} t) \nonumber \\
&= -\frac{1}{N} \left[\frac{2\pi \lambda_m }{\alpha_{\rm L} C_{\rm L}} p\left(\left(\frac{C_{\rm L}}{t} \right)^{\frac{1}{\alpha_{\rm L}}} \right) \left( \frac{C_{\rm L}}{t} \right)^{\frac{2}{\alpha_{\rm L}}+1} {\rm d} t + \frac{2\pi \lambda_m}{\alpha_{\rm N} C_{\rm N}} \left(1 - p\left(\left(\frac{C_{\rm N}}{t} \right)^{\frac{1}{\alpha_{\rm N}}} \right) \right) \left( \frac{C_{\rm N}}{t}\right)^{\frac{2}{\alpha_{\rm N}}+1} {\rm d} t\right]
\end{align}
and 
\begin{align}
&\tilde {\tilde \Lambda}^{(m)}({\rm d} t) \nonumber \\
&= -\left(1 - \frac{1}{N}\right) \left[\frac{2\pi \lambda_m }{\alpha_{\rm L} C_{\rm L}} p\left(\left(\frac{C_{\rm L}}{t} \right)^{\frac{1}{\alpha_{\rm L}}} \right) \left( \frac{C_{\rm L}}{t} \right)^{\frac{2}{\alpha_{\rm L}}+1} {\rm d} t + \frac{2\pi \lambda_m}{\alpha_{\rm N} C_{\rm N}} \left(1 - p\left(\left(\frac{C_{\rm N}}{t} \right)^{\frac{1}{\alpha_{\rm N}}} \right) \right) \left( \frac{C_{\rm N}}{t}\right)^{\frac{2}{\alpha_{\rm N}}+1} {\rm d} t\right].
\end{align}
The PDF $ f_{}^{(m)}(T)$ is defined in Definition \ref{def:pdf_K}. 
\end{lemma}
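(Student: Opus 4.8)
The plan is to evaluate $\CMcal{L}_{I_m}(s) = \mathbb{E}[\exp(-sI_m)]$ by conditioning on the $K_m$-th strongest link power $T_{K_m}$ of operator $m$, which plays the role of a protection boundary: by the construction of $\CMcal{A}_m$, every interferer $i \in \mathbb{N}\backslash\CMcal{A}_m$ has link power strictly below $T_{K_m}$. First I would recall (as in the proof of Lemma \ref{lem:PDF_K}) that the link powers of all BSs of operator $m$ form a Poisson point process on $(0,\infty)$ that is the superposition of two independent PPPs --- the LoS one with intensity measure $\Lambda_{\rm L}$ and the NLoS one with intensity measure $\Lambda_{\rm N}$ --- which arise from independent thinning of $\Phi_m$ by the LoS probability $p(\cdot)$ followed by the mapping theorem applied to $r \mapsto C_s r^{-\alpha_s}$. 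Because the event $\{T_{K_m} \ge \tau\}$ is determined solely by the number of points of this PPP in $[\tau,\infty)$, the restriction of the PPP to $(0,\tau)$ is independent of $\{T_{K_m}\ge\tau\}$ for every $\tau$; hence, conditioned on $\{T_{K_m} = T\}$, the interferer link powers form a PPP on $(0,T)$ with intensity measure $\big(-\Lambda_{\rm L}'(t) - \Lambda_{\rm N}'(t)\big)\,{\rm d}t$ (using $\lambda \leftarrow \lambda_m$), independent of the realized value $T$.

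Next I would attach to each interferer two independent marks. The directivity gain is $G$ with probability $1/N$ and $g = \epsilon$ with probability $1 - 1/N$: for an out-of-cluster BS the AoD to its served user and the AoD $\theta_i^{(m)}$ to the typical user are independent and uniform on $[0,2\pi)$, so by the sectored model \eqref{eq:sectored_model} the wrapped offset falls in the main lobe with probability $\frac{2\pi/N}{2\pi} = \frac1N$. The second mark is the fading power $|\tilde\beta_i^{(m)}|^2 \sim \mathrm{Exp}(1)$. Writing $I_m$ as a sum over this marked PPP and invoking the probability generating functional, together with $\mathbb{E}[e^{-c|\tilde\beta|^2}] = (1+c)^{-1}$, gives
\begin{align}
\mathbb{E}\big[\,e^{-sI_m}\,\big|\,T_{K_m} = T\,\big] = \exp\!\left(-\int_0^T \left[\frac1N\,\frac{sP_mGt}{1+sP_mGt} + \Big(1-\frac1N\Big)\frac{sP_mgt}{1+sP_mgt}\right]\big(-\Lambda_{\rm L}'(t)-\Lambda_{\rm N}'(t)\big)\,{\rm d}t\right).
\end{align}
A direct differentiation of $\Lambda_{\rm L}(t)$ and $\Lambda_{\rm N}(t)$ shows that $\tfrac1N\big(\Lambda_{\rm L}'(t)+\Lambda_{\rm N}'(t)\big)\,{\rm d}t$ equals $\tilde\Lambda^{(m)}({\rm d}t)$ and $\big(1-\tfrac1N\big)\big(\Lambda_{\rm L}'(t)+\Lambda_{\rm N}'(t)\big)\,{\rm d}t$ equals $\tilde{\tilde\Lambda}^{(m)}({\rm d}t)$ in the statement; rewriting $\int_0^T \cdots\big(-\Lambda'\big)$ as $\int_T^0 \cdots\,\Lambda'$ then puts the conditional Laplace transform in exactly the claimed form. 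Finally, deconditioning over $T_{K_m}$ with its PDF $f^{(m)}_{T_{K_m}}$ from Lemma \ref{lem:PDF_K} / Definition \ref{def:pdf_K} yields $\CMcal{L}_{I_m}(s) = \int_0^\infty \mathbb{E}[e^{-sI_m}\,|\,T_{K_m}=T]\, f^{(m)}_{T_{K_m}}(T)\,{\rm d}T$.

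I expect the main obstacle to be the conditioning step of the first paragraph: making rigorous that, given the $K_m$-th order statistic of the link-power process equals $T$, the interferers below $T$ are distributed exactly as the original PPP restricted to $(0,T)$ and independently of $T$. This is the standard fact underlying dynamic-clustering analyses such as \cite{lee:twc:15, nigam:tcom:14}; the clean argument is the level-by-level independence sketched above. A minor but error-prone point is sign bookkeeping, since $\Lambda_{\rm L}$ and $\Lambda_{\rm N}$ are \emph{decreasing} in the link-power variable, so $\Lambda_{\rm L}', \Lambda_{\rm N}' < 0$ and the reversed integration limits $\int_T^0$ in the statement are precisely what restore positivity of the exponent; similarly, the marking step requires care that the main-lobe fraction $1/N$ applies simultaneously to the LoS and NLoS sub-processes, which is why $\tilde\Lambda^{(m)}$ and $\tilde{\tilde\Lambda}^{(m)}$ each carry both a $(C_{\rm L},\alpha_{\rm L})$ and a $(C_{\rm N},\alpha_{\rm N})$ term.
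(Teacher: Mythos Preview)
Your proposal is correct and follows essentially the same approach as the paper's proof: condition on $T_{K_m}=T$, view the out-of-cluster link powers as a one-dimensional PPP on $(0,T)$ obtained via the mapping (displacement) theorem, apply the PGFL together with the exponential fading Laplace transform and an independent $1/N$ vs.\ $1-1/N$ thinning for the main-/side-lobe gain, then decondition using $f^{(m)}_{T_{K_m}}$. Your write-up is in fact more careful than the paper's in justifying the conditioning step and the main-lobe probability $1/N$, but the strategy is identical.
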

\begin{proof}
See Appendix \ref{appen:lem3}. 
\end{proof}

Exploiting the obtained Laplace transform, we derive the rate coverage probability in the following theorem. 

\begin{theorem} \label{theo:rcov}
When $K_1 > 1$, the rate coverage probability is 
\begin{align}
R_{\rm cov}(\gamma) = \int_{T_1 = 0}^{\infty} \int_{T_{K_1} = 0}^{T_1} e^{-\frac{\tilde \gamma \sigma^2}{P_1 pG T_1 }}\CMcal{L}_{I_1| T_{K_1}}\left(\frac{\tilde \gamma}{P_1 pG T_1}\right) \cdot \prod_{m \in \CMcal{M} \backslash 1} \CMcal{L}_{I_m} \left(\frac{\tilde \gamma}{P_1 pG T_1} \right) f^{(1)}_{\rm Joint}(T_1, T_{K_1}) {\rm d} T_{Ke_1} {\rm d} T_1,
\end{align}
where $\tilde \gamma = 2^{\gamma/\left(\sum_{m \in \CMcal{M}} W_m \right)}-1$, $f^{(1)}_{\rm Joint}(T_1, T_{K_1})$ is the joint PDF for operator $1$, and the conditional Laplace transform for operator $1$ is 
\begin{align}
\CMcal{L}_{I_1 | T_{K_1}}(s) =  \exp\left(- \left[\int_{T_{K_1}}^{0} \frac{sP_1 G t}{1 +sP_1 Gt } \tilde \Lambda^{m} ({\rm d} t) + \int_{T_{K_1}}^{0} \frac{sP_1 gt}{1 +sP_1 gt } \tilde {\tilde \Lambda}^{(m)} ({\rm d} t) \right]  \right). 
\end{align}
When $K_1 = 1$, the rate coverage probability is
\begin{align}
R_{\rm cov}(\gamma) = \int_{T_{1} = 0}^{\infty} e^{-\frac{\tilde \gamma \sigma^2}{P_1 pG T_1 }}\CMcal{L}_{I_1| T_{K_1}}\left(\frac{\tilde \gamma}{P_1 pG T_1}\right) \cdot \prod_{m \in \CMcal{M} \backslash 1} \CMcal{L}_{I_m} \left(\frac{\tilde \gamma}{P_1 pG T_1} \right) f^{(1)}(T_1)  {\rm d} T_1.
\end{align}
\end{theorem}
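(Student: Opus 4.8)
The plan is to condition on the two operator-1 link powers that govern the desired signal and the operator-1 coordination boundary, use the exponential law of the small-scale fading to convert the coverage event into Laplace transforms of the interference, factor the interference across operators by PPP independence, and finally marginalize using the joint PDF of Lemma~\ref{lem:jointPDF_1_K}.

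First I would rewrite \eqref{eq:rcov} as $R_{\rm cov}(\gamma)=\mathbb{P}[\,{\rm SINR}>\tilde\gamma\,]$ with $\tilde\gamma=2^{\gamma/(\sum_{m\in\CMcal{M}}W_m)}-1$, by monotonicity of $\log_2$. By the association rule \eqref{eq:asso_rule} the serving link power is the strongest operator-1 link power $T_1=C_s\|{\bf d}_{i_a}^{(1)}\|^{-\alpha_s}$, and the operator-1 out-of-cluster boundary is $T_{K_1}$. Writing the aggregate interference in \eqref{eq:sinr_ns} as $\sum_{m\in\CMcal{M}}I_m$ (the sum includes $m=1$, so $I_1$ is the operator-1 term), conditioning on $(T_1,T_{K_1})$ turns $\{{\rm SINR}>\tilde\gamma\}$ into $\{\,|\tilde\beta_{i_a}^{(1)}|^2 > s(\sigma^2+\sum_{m\in\CMcal{M}}I_m)\,\}$ with $s=\tilde\gamma/(P_1 pG T_1)$. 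Since $|\tilde\beta_{i_a}^{(1)}|^2\sim{\rm Exp}(1)$ independently of the point processes, the AoDs, and the other fadings (channel model), its conditional probability equals $\mathbb{E}[e^{-s(\sigma^2+\sum_m I_m)}\mid T_1,T_{K_1}]$, which factors as $e^{-s\sigma^2}\,\mathbb{E}[e^{-sI_1}\mid T_1,T_{K_1}]\prod_{m\in\CMcal{M}\backslash 1}\mathbb{E}[e^{-sI_m}]$ because the $\Phi_m$ are mutually independent, so each $I_m$ with $m\neq1$ is independent of $(T_1,T_{K_1})$ and of the other $I_{m'}$.

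The next step identifies the factors. For $m\neq1$, $\mathbb{E}[e^{-sI_m}]=\CMcal{L}_{I_m}(s)$ is exactly Lemma~\ref{lem:laplace_int} (which already integrates out $T_{K_m}$), evaluated at $s=\tilde\gamma/(P_1 pG T_1)$. For $m=1$, I would argue that conditioned on $T_{K_1}=t$ the interfering operator-1 BSs --- those whose link power lies below $t$ --- form a PPP with the base intensity restricted to link powers below $t$, independent of the $K_1-2$ in-cluster BSs with power in $(t,T_1)$ and of $T_1$ itself; running the same marking-and-Campbell computation as in the proof of Lemma~\ref{lem:laplace_int} but with the link-power integral truncated at $t$ gives $\CMcal{L}_{I_1\mid T_{K_1}}(s)$ as stated (with $\lambda_m\leftarrow\lambda_1$). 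Collecting the factors and marginalizing against $f^{(1)}_{\rm Joint}(T_1,T_{K_1})$ over $0<T_{K_1}<T_1$ (Lemma~\ref{lem:jointPDF_1_K}, Definition~\ref{def:K_1}) yields the $K_1>1$ expression. For $K_1=1$ the coordination set is the lone serving BS, so $T_{K_1}=T_1$, only $T_1$ needs conditioning, $I_1$ collects \emph{all} other operator-1 BSs, and marginalizing with $f^{(1)}(T_1)$ (Lemma~\ref{lem:PDF_K} with $K=1$) gives the second expression.

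The main obstacle is the conditional Laplace transform of $I_1$: one must rigorously justify that, given $T_{K_1}=t$, the operator-1 interferers form a PPP on the region of link powers below $t$ with the original intensity and are independent of $T_1$ --- i.e., that the in-cluster BSs with power in $(t,T_1)$ may be discarded without disturbing the law of the rest. This is the order-statistics / conditional-independence structure of the Poisson process, and it is the one place requiring care; the remaining ingredients --- the exponential-fading decoupling, the per-operator factorization from PPP independence, and the uniform-AoD sectored marking --- are routine once Lemmas~\ref{lem:PDF_K}--\ref{lem:laplace_int} are in hand.
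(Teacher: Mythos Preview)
Your proposal is correct and follows essentially the same approach as the paper: convert the rate event to an SINR threshold $\tilde\gamma$, exploit the ${\rm Exp}(1)$ law of $|\tilde\beta_{i_a}^{(1)}|^2$ to produce a product of Laplace transforms, invoke Lemma~\ref{lem:laplace_int} for $m\neq 1$ while computing $\CMcal{L}_{I_1\mid T_{K_1}}$ by the same PGFL argument truncated at $T_{K_1}$, and marginalize with the joint PDF of Lemma~\ref{lem:jointPDF_1_K}. If anything, you are more explicit than the paper about the Poisson conditional-independence step (that given $T_{K_1}=t$ the out-of-cluster operator-1 points remain a PPP on link powers below $t$, independent of $T_1$), which the paper simply uses without comment.
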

\begin{proof}
With $\tilde \gamma$, the rate coverage probability is written as
\begin{align} \label{eq:rate_condition}
R_{\rm cov}(\gamma) =& \mathbb{P}\left[H_{i_a}^{(1)} > \frac{\sigma^2 \tilde \gamma}{P_1 pG T_1} + \sum_{m \in \CMcal{M}} \frac{I_m \tilde \gamma}{P_1 pG T_1} \right] \nonumber \\
=& \mathbb{E}_{T_1, T_{K_1}} \left[e^{-\frac{\sigma^2 \tilde \gamma }{P_1 pG T_1}} \CMcal{L}_{I_1 | T_{K_1}}\left( \frac{\tilde \gamma}{P_1 pG T_1}\right) \prod_{m \in \CMcal{M}\backslash 1} \CMcal{L}_{I_m}\left(\frac{\tilde \gamma}{P_1 pG T_1} \right) \right].
\end{align}
We obtain the conditional $\CMcal{L}_{I_1}(s)$ by using the similar way to Lemma \ref{lem:laplace_int}. One difference is that $T_1$ and $T_{K_1}$ in operator $1$ are correlated, so we marginalize them later at once 
\begin{align}
\CMcal{L}_{I_1| T_{K_1}}(s) = \exp\left(-\left[\int_{T_{K_1}}^{0} \frac{P_1 G t}{1+ P_1 G t}\tilde \Lambda^{(1)}({\rm d }t)+  \int_{T_{K_1}}^{0} \frac{P_1 g t}{1+P_1 g t} \tilde {\tilde \Lambda}^{(1)}({\rm d} t)  \right] \right).
\end{align} 
Marginalizing \eqref{eq:rate_condition} regarding $T_1$ and $T_{K_1}$ completes the proof. 
\end{proof}

We note that although Theorem \ref{theo:rcov} relies on Rayleigh fading, the key step to characterize the interference statistic applying BS coordination can be used in a general fading scenario. We verify Theorem \ref{theo:rcov} in the next section.

\section{Numerical Results and Discussions}
In this section, we first verify correctness of the obtained analytical expressions by comparing to simulation results. In addition, we provide useful intuition regarding system design in spectrum-shared mmWave cellular networks.  

\subsection{Inter-Operator BS Coordination}

\begin{figure}[!t]
\centerline{\resizebox{0.65\columnwidth}{!}{\includegraphics{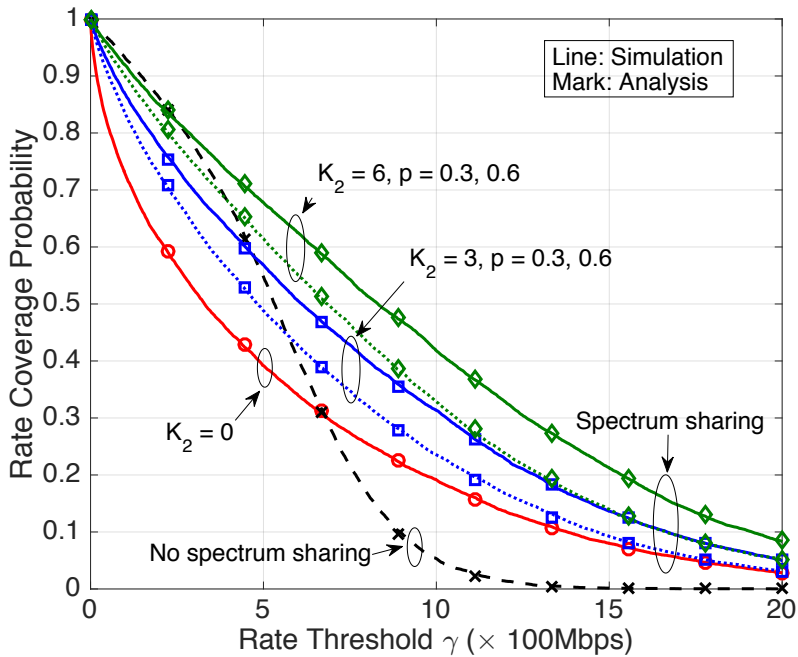}}}     
\caption{The rate coverage probability comparison. We assume that two operator cellular networks share spectrum, i.e, $\left| \CMcal{M}\right| = 2$. The used system parameters are as follows: $\alpha_{\rm L} = 2$, $\alpha_{\rm N} = 4$, $C_{\rm L} = -60{\rm dB}$, $C_{\rm N} = -70 {\rm dB}$, $\mu = 144$, $N = 12$ (beamwidth $30{\degree}$), $p = \{0.3,0.6\}$, $\epsilon = -10{\rm dB}$, and $N_0 = -174{\rm dBm}/{\rm Hz}$. The parameters of each operator is: $P_1 = 20 {\rm dBm}$, $\lambda_1 = 5\times 10^{-5}$, $W_1 = 100 \; {\rm MHz}$, $P_2 = 25 {\rm dBm}$, $\lambda_2 = 10^{-4}$, and $W_2 = 200 \;{\rm MHz}$. }
 \label{fig:interBS}
\end{figure}

We assume that operator $1$ and $2$ share the spectrum, i.e., $\CMcal{M} = \{1,2\}$. The BSs of operator $2$ are coordinated to mitigate the interference to the typical user, while there is no coordination in operator $1$. We write this case as $K_1 = 1$ and $K_2 \ge 0$. We note that although we depict a two operator case for simplicity, our analysis is not limited to this scenario. 
We depict the rate coverage curves in Fig.~\ref{fig:interBS}, where the assumed system parameters are in the caption. For the no spectrum sharing case, we assume $\CMcal{M} = \{1\}$. In Fig.~\ref{fig:interBS}, we assumed that the intra-cluster interference is completely removed and the typical user obtains the fixed beamforming gain $pG$, where $p$ is shown in Fig.~\ref{fig:interBS}. We note that this is the same assumption used in the analysis. 

From Fig.~\ref{fig:interBS}, if no BS coordination is used, i.e., $K_2 = 0$, then the spectrum sharing rather decreases the edge and the median rate compared to the no spectrum sharing case. 
This is because operator $2$ has a large density and transmit power compared to those of operator $1$. Therefore a large amount of inter-operator interference impacts the typical user by spectrum sharing. For these reasons, the benefits of using large bandwidth vanish. Directional beamforming may reduce the interference, but the beamwidth is not narrow enough to compensate the performance degradation. 
When applying BS coordination, the typical user gains improve performance with spectrum sharing. For example, the typical user obtains $15\%$ median rate gain when $K_2 = 3$ and $p = 0.6$ and $57\%$ median rate gain when $K_2 = 6$ and $p = 0.6$ over the no spectrum sharing case. In particular, if $K_2 = 6$ and $p = 0.6$, the median rate gain is brought without edge rate degradation. 
Apparently, these gains come from the fact that the BS coordination removes the strongest interference of operator $2$, so that the typical user obtains benefits of using large bandwidth without huge interference. 
We also observe that how $p$ affects the rate coverage performance, where $0 < p \le 1$ is the relative beamforming gain of the typical user. 
 When $p = 0.3$, the significant amount of beamforming gain is lost due to the interference mitigation, so that the inter-operator BS coordination does not increase the rate coverage efficiently. 

We note that inter-operator BS coordination becomes more useful when sharing the spectrum with a dense operator. In \cite{gupta:tcom:16}, it was shown that spectrum sharing offers meaningful performance gain without BS coordination, provided that two operators' densities and transmit power are equal. 
A main source of performance degradation in spectrum sharing is the interference whose power is greater than the desired link. For example, a single interference signal can lead the operating SINR regime below $0 {\rm dB}$ if the interference power is larger than the desired link. 
When the sharing operator is dense, there exist multiple strong interference signals so that interference management via BS coordination is required to extract performance gain. 
On the contrary, when two operators' densities are equal, there are not many strong interference signals, meaning that BS coordination is less necessary.
From this observation, we conclude that inter-operator BS coordination is a key enabler for sharing the spectrum with a dense operator. 

\subsection{Intra-Operator BS Coordination}

\begin{figure}[!t]
\centerline{\resizebox{0.65\columnwidth}{!}{\includegraphics{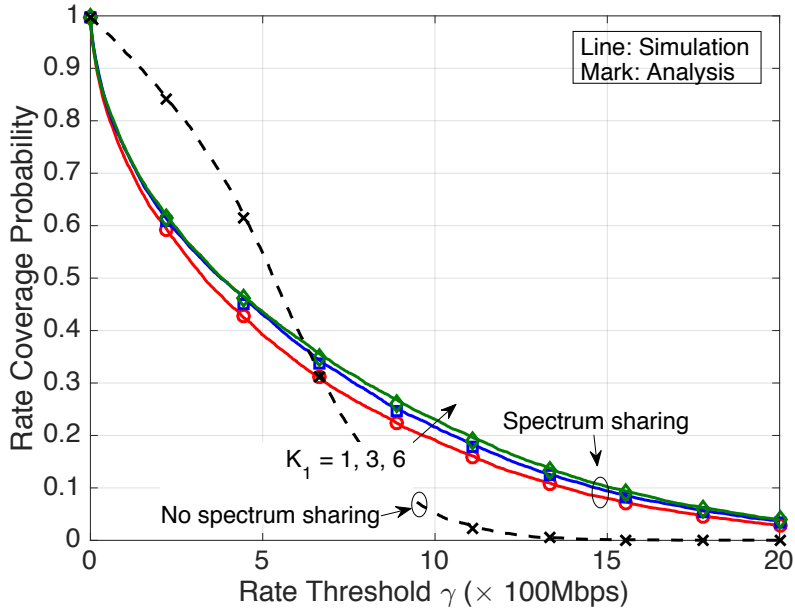}}}     
\caption{The rate coverage probability comparison. We assume two operator cellular networks, i.e, $\left| \CMcal{M}\right| = 2$. The used system parameters are as follows: $\alpha_{\rm L} = 2$, $\alpha_{\rm N} = 4$, $C_{\rm L} = -60{\rm dB}$, $C_{\rm N} = -70 {\rm dB}$, $\mu = 144$, $N = 12$ (beamwidth $30{\degree}$), $p = 1$, $\epsilon = -10{\rm dB}$, and $N_0 = -174{\rm dBm}/{\rm Hz}$. The parameters of each operator is: $P_1 = 20 {\rm dBm}$, $\lambda_1 = 5\times 10^{-5}$, $W_1 = 100\; {\rm MHz}$, $P_2 = 25 {\rm dBm}$, $\lambda_2 = 10^{-4}$, and $W_2 = 200 \;{\rm MHz}$. }
 \label{fig:intraBS}
\end{figure}

We now investigate the performance of intra-operator BS coordination, where BS coordination is applied only in operator $1$. Accordingly, we set $K_2 = 0$. We assume the same system setting in the previous subsection. We illustrate the rate coverage probability in Fig.~\ref{fig:intraBS}, in which we observe that the obtained analytical results are matched with the simulation results. 

Unlike the inter-operator BS coordination, the intra-operator BS coordination does not provide meaningful gain in Fig.~\ref{fig:intraBS}. Specifically, the edge and the median rate are lower than the no spectrum sharing case. In Fig.~\ref{fig:intraBS}, we assume $p = 1$, which is an ideal case that the typical user obtains the beamforming gain without any loss. Since a practical value of $p$ is lower than $1$, the intra-cluster BS provides less rate performance than Fig.~\ref{fig:intraBS} in practice. 
The main rationale is as follows. Using the intra-operator BS coordination, we only remove the interference whose power is weaker than the desired link since the desired link is the strongest link in operator $1$ by the association rule. Unfortunately, this is not much effective since the removed interference is not a main performance impairment factor in the spectrum sharing. A main factor is the interference whose power is larger than the desired link, and the inter-operator BS coordination is able to remove this. Accordingly, when sharing the spectrum with a dense operator, using inter-operator BS coordination is more desirable than intra-operator BS coordination.

Intra-operator BS coordination may not be needed in mmWave cellular network cases, even in the absence of spectrum sharing. Comparing $K_1 = 1$ and $K_2 = 6$, almost no gain is presented. This implies that the out-of-cell interference of operator $1$ is negligible, so removing it does not bring significant performance gain. This is different from the results in \cite{lee:twc:15}, which showed that observable performance gain are obtained by applying dynamic BS coordination in a sub-6 GHz cellular network. The difference comes from that, in a mmWave network, directional beamforming and vulnerability to blockages of mmWave signals inherently reduces the interference. For this reason, we claim that there is no need to make an effort to mitigate the out-of-cell interference in a mmWave cellular network. The prior work \cite{sarabjot:jsac:15} backs this claim. It showed that, in a single-tier mmWave cellular network, the SINR operates in a noise-limited regime since the out-of-cell interference is trivial.  



\subsection{The Beamwidth Effect}

\begin{figure}[!t]
\centerline{\resizebox{0.65\columnwidth}{!}{\includegraphics{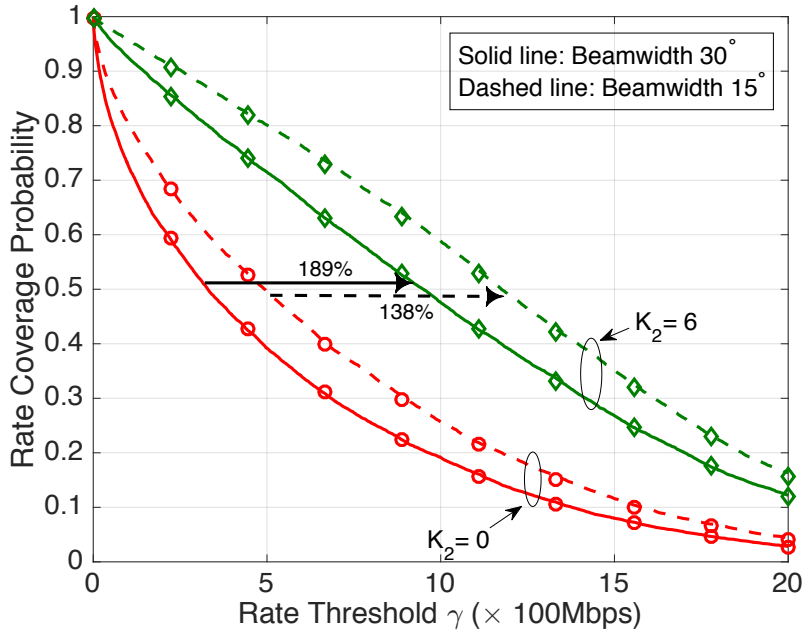}}}     
\caption{The rate coverage probability comparison. We assume two operator cellular networks, i.e, $\left| \CMcal{M}\right| = 2$. The used system parameters are as follows: $\alpha_{\rm L} = 2$, $\alpha_{\rm N} = 4$, $C_{\rm L} = -60{\rm dB}$, $C_{\rm N} = -70 {\rm dB}$, $\mu = 144$, $N = \{12,24\}$ (beamwidth $\{30{\degree}, 15{\degree}\}$), $p = 1$, $\epsilon = -10{\rm dB}$, and $N_0 = -174{\rm dBm}/{\rm Hz}$. The parameters of each operator is: $P_1 = 20 {\rm dBm}$, $\lambda_1 = 5\times 10^{-5}$, $W_1 = 100 \;{\rm MHz}$, $P_2 = 25 {\rm dBm}$, $\lambda_2 = 10^{-4}$, and $W_2 = 200\; {\rm MHz}$. }
 \label{fig:width_compare}
\end{figure}

We explore how the BS coordination gain is changed depending on the beamwidth. 
We draw the rate coverage graphs in Fig.~\ref{fig:width_compare}, assuming the beamwidth $15\degree$ and $30\degree$, and the inter-operator BS coordination with $K_2 = 0$ and $K_2 = 6$. In Fig.~\ref{fig:width_compare}, it is shown that the narrow beam case has higher rate coverage performance than the wide beam case. This is not surprising because the amount of interference decreases as the beamwidth becomes smaller. 
Now we examine the relative median rate gain of the BS coordination. As described in Fig.~\ref{fig:width_compare}, the wide beam case has $189\%$ median rate gain with $K_2 = 6$, while the narrow beam case has $138\%$ median rate gain in the same coordination environment. This means that the BS coordination is more efficient in the wide beam case than in the narrow beam case. 
We explain the reason as follows. The coordination set includes the BSs based on their link power $C_{s} \left\|{\bf{d}}_i \right\|^{-\alpha_s}$. The actual interference, however, also incorporates the directionality gain $G\left( \theta_i\right)$. For this reason, there is a non-zero probability that a BS not included in a coordination set incurs strong interference to the typical user. In such a case, BS coordination does not efficiently cancel the significant interference, leading to unsatisfactory gain. 
Since the directionality gain increases as a beam becomes narrow, this probability increases in the narrow beam case, providing less gain than in the wide beam case. 

In a system design perspective, using narrow beams and BS coordination are complementary each other in a role of interference management. For instance, if we cannot use narrow beams, BS coordination can be applied alternatively to mitigate the interference. In this case, we expect considerable gain from BS coordination as it is more efficient with a wide beam. On the contrary, if narrow beams are available to use, BS coordination is not necessary since the interference is already reduced by directionality. 

We also conjecture that BS coordination is efficient in rich scattering. When there is sufficient scattering, signals arrive to a user from many directions, so that directional beamforming gain decreases. As an extreme case, when there is very rich scattering and dispersion, this becomes equivalent when no directional beamforming is used. In this case, the interference signal power is determined only by the corresponding path-loss, while the directionality gain becomes negligible. For this reason, the BSs whose strong interference signal power are included in a coordination set, so that the interference is efficiently managed. For this reason, we expect that the inter-operator BS coordination is useful as there is a lot of scattering. 


\subsection{LoS/NLoS Ratio in a Coordination Set}
\begin{figure}[!t]
\centerline{\resizebox{0.65\columnwidth}{!}{\includegraphics{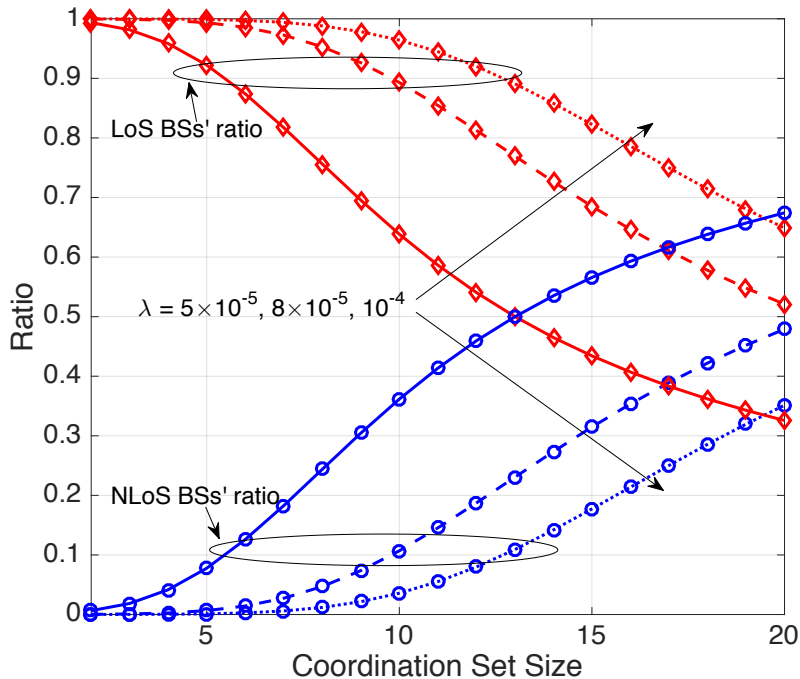}}}     
\caption{The ratio of LoS and NLoS BSs in a coordination set. The assumed system parameters are as follows: $\alpha_{\rm L} = 2$, $\alpha_{\rm N} = 4$, $C_{\rm L} = -60{\rm dB}$, $C_{\rm N} = -70 {\rm dB}$, $\lambda = \{5 \times 10^{-5}, 8 \times 10^{-5}, 10^{-4}\}$, $\left| \CMcal{A}\right| = \{1,..,20\}$, and $\mu = 144$.}
 \label{fig:LoS_ratio}
\end{figure}

In this subsection, we study the ratio of LoS and NLoS BSs in a coordination set. The average ratio is analytically derived in Example \ref{example:LoSratio}. We plot the obtained analytical expression in Fig.~\ref{fig:LoS_ratio} assuming various densities and coordination set size. 
As observed in Fig.~\ref{fig:LoS_ratio}, the ratio of LoS BSs increases as (i) the coordination set size decreases and (ii) the density increases. 
Basically, a LoS BS has large power compared to a NLoS BS. Therefore, LoS BSs have higher probability to be included in a coordination set compared to NLoS BSs, if there distances are comparable. Nevertheless, since the number of LoS BSs exponentially decreases as their distances increase, the existing LoS BSs' distances exponentially increase. 
When the coordination set size is small, only BSs located near the origin are candidates to be included in a coordination set. In this case, LoS BSs and NLoS BSs' distances are similar, therefore the coordination set consists of mostly LoS BSs. As the set size increases, however, the coordination set is likely to include further located BSs. The existing LoS BSs' distances are much larger than NLoS BSs, the LoS BSs' ratio decreases. 
When the density increases, there are more LoS BSs within similar distances to NLoS BSs, so the LoS BSs' ratio increases. 
This result provides insights regarding the states of strong interference in mmWave cellular networks. For example, if $\lambda = 8 \times 10^{-5}$, about $90\%$ of the $10$ strong interference is LoS, while this decreases to $65\%$ if $\lambda = 5 \times 10^{-5}$. For this reason, assuming that only LoS interference can be cancelled, such interference cancellation may be efficient when $\lambda = 8 \times 10^{-5}$ since $90\%$ of strong interference can be removed. If $\lambda = 5 \times 10^{-5}$, however, this can remove only $65\%$ of strong interference and there is remaining $45\%$ NLoS interference that decreases the SINR. For this reason, the assumed interference cancellation may not be efficient when $\lambda = 5 \times 10^{-5}$ compared to $\lambda = 8 \times 10^{-5}$. 

\section{Conclusions}
We analyzed the rate coverage probability of a spectrum-shared mmWave cellular network employing inter-operator BS coordination. 
As a key step, we derived the PDF of the $K$-th strongest link power incorporating the LoS and NLoS states and showed the the obtained PDF reduces to previous results in simpler special cases. 
Leveraging this PDF, we obtained the rate coverage expression as a function of key system parameters, chiefly the coordination set size. 
In the simulation, we verified the correctness of the analysis.
Major findings from the analysis are as follows. First, inter-operator BS coordination provides significant performance gains in spectrum sharing with a dense operator, where the gains comes from mitigating the inter-operator interference. 
Second, intra-operator BS coordination does not leas to much performance improvement. 
Third, the inter-operator BS coordination is more efficient when there are wide beams. In summary, we show that the inter-operator BS coordination is necessary in a spectrum-shared mmWave cellular system, especially when sharing the spectrum with a dense and high power operator.

There are several directions for future work. 
One can consider the overheads associated with measuring CSIT and beam direction. Considering this, the beamwidth and the coordination set size can be jointly optimized. For example, assuming that finding an exact beam direction is difficult, a large coordination set might be preferred. 
Beyond our simple ZF-like hybrid precoder, one can also consider more advanced precoding methods as in \cite{spark:tsp:17}. 
In addition, considering a partial loading effect \cite{gupta:tcom:16} is also promising. If the user density is less than the BS density in a particular operator, some BSs can be turned off since there is no user associated with them. Under this assumption, the performance of inter-operator BS coordination may be affected not only by the BS density, but also the user density. 
Next, one can use a more realistic channel model beyond a simple sectored antenna model. As shown in \cite{valenzuela:ctw:17}, a sectored antenna model can be far from real mmWave networks, especially for NLoS signals because of scattering and dispersion. Since our key lemma regarding the $K$-th strongest BS's link power (Lemma \ref{lem:PDF_K}) does not depend on a specific channel model, one can use our results to characterize the performance by using realistic channel models.
Finally, by using the developed analytical framework, one can consider more sophisticated sharing scenario, for example sharing with a Wi-Fi  \cite{li:twc:16}, satellite service \cite{guidolin:gc:15}, and radar \cite{raymond:jsac:16}, or access and infrastructure sharing with other operators \cite{kibilda:twc:17}. 

\appendices
\section{Proof of Lemma 1} \label{appen:lem1}
We first present that the CDF of the $K$-th strongest BS's link power, i.e., $F_{T_K}(t)$, is equivalent to the probability that there are less than $K$ BSs (including LoS and NLoS BSs) whose link power is larger than $t$. 
To describe this probability, we define $N_{\rm L}(t_1, t_2)$ (or $N_{\rm N}(t_1, t_2)$) as the number of LoS (or NLoS) BSs whose link power is in the region $\left(t_1, t_2 \right]$. Than, the CDF $F_{T_K}(t)$ is obtained as follows
\begin{align}
F_{T_K}(t)  =
\mathbb{P}\left[T_K < t \right]  
&= \sum_{k = 0}^{K-1} \sum_{\ell = 0}^{K-k-1} \mathbb{P}\left[\{ N_{\rm L}(t, \infty) = k\} \cap \{N_{\rm N}(t, \infty) = \ell\} \right] \nonumber \\
& \mathop{=}^{(a)} \sum_{k = 0}^{K-1} \sum_{\ell = 0}^{K-k-1} \mathbb{P}\left[ N_{\rm L}(t, \infty) = k\right] \mathbb{P}\left[ N_{\rm N}(t, \infty) = \ell \right] \nonumber \\
& \mathop{=}^{(b)} e^{-\Lambda_{\rm L}(t)} e^{-\Lambda_{\rm N}(t) }\sum_{k = 0}^{K-1} \sum_{\ell = 0}^{K-k-1} \frac{\Lambda_{\rm L}(t) ^ k }{k !} \frac{\Lambda_{\rm N}(t) ^{\ell} }{\ell !},
\end{align}
where (a) follows that $\Phi_{\rm L}$ and $\Phi_{\rm N}$ are mutually independent and (b) follows Displacement theorem \cite{baccelli:inria} where the number of the BSs in a closed region is a Poisson random variable with the intensity measure $\Lambda_{\rm L}(t)$ and $\Lambda_{\rm N}(t)$. The intensity measure $\Lambda_{\rm L}(t)$ is computed as
\begin{align}
\Lambda_{\rm L}(t) =& \mathbb{E}\left[\sum_{{\bf{d}} \in \Phi_{\rm L}} {\bf{1}}\left(C_{\rm L} \left\| {\bf{d}} \right\|^{-\alpha_{\rm L}}>t \right) \right] 
\mathop{=}^{(a)}  2\pi \lambda \int_{0}^{\psi_{\rm L}(t)} p(x) x {\rm d}x, 
\end{align}
where (a) follows Campbell's theorem with $\psi_{\rm L}(t) = \left(\frac{C_{\rm L}}{t} \right)^{\frac{1}{\alpha_{\rm L}}}$. Similar to this, we have
\begin{align}
\Lambda_{\rm N}(t) 
=&  2\pi \lambda \int_{0}^{\psi_{\rm N}(r)} (1-p(x)) x {\rm d}x, 
\end{align}
where $\psi_{\rm N}(r) = \left(\frac{C_{\rm N}}{t} \right)^{\frac{1}{\alpha_{\rm N}}}$. Subsequently, we obtain the PDF of $T_K$ by differentiating $F_{T_K}(t)$ as follows 
\begin{align}
f_{T_K}(t)  = \frac{\partial F_{T_K}(t)}{\partial t} 
=& -\Lambda'_{\rm L}(t) e^{-\Lambda_{\rm L}(t) } e^{-\Lambda_{\rm N}(t)} \sum_{k = 0}^{K-1} \sum_{\ell = 0}^{K-k-1} \frac{\Lambda_{\rm L}(t) ^ k }{k !} \frac{\Lambda_{\rm N}(t) ^{\ell} }{\ell !} \nonumber \\
& +\Lambda'_{\rm L}(t) e^{-\Lambda_{\rm L}(t) } e^{-\Lambda_{\rm N}(t)} \sum_{k = 1}^{K-1} \sum_{\ell = 0}^{K-k-1} \frac{\Lambda_{\rm L}(t) ^{k-1} }{(k-1) !} \frac{\Lambda_{\rm N}(t) ^{\ell} }{\ell !} \nonumber \\
& -\Lambda'_{\rm N}(t) e^{-\Lambda_{\rm L}(t) } e^{-\Lambda_{\rm N}(t)} \sum_{k = 0}^{K-1} \sum_{\ell = 0}^{K-k-1} \frac{\Lambda_{\rm L}(t) ^ k }{k !} \frac{\Lambda_{\rm N}(t) ^{\ell} }{\ell !} \nonumber \\
& +\Lambda'_{\rm N}(t) e^{-\Lambda_{\rm L}(t) } e^{-\Lambda_{\rm N}(t)} \sum_{k = 0}^{K-1} \sum_{\ell = 1}^{K-k-1} \frac{\Lambda_{\rm L}(t) ^ k }{k !} \frac{\Lambda_{\rm N}(t) ^{\ell-1} }{(\ell-1) !}.
\end{align}
Calculating the first $\&$ the second term, and the third term $\&$ the fourth term separately, we have
\begin{align}
f_{T_K}(t) =& -\Lambda_{\rm L}'(t) e^{-\Lambda_{\rm L}(t) } e^{-\Lambda_{\rm N}(t)} \left(\sum_{k=0}^{K-1}\frac{\Lambda_{\rm L}(t)^k}{k!} \frac{\Lambda_{\rm N}(t)^{K-k-1}}{(K-k-1)!} \right) \nonumber \\
& -\Lambda_{\rm N}'(t) e^{-\Lambda_{\rm L}(t) } e^{-\Lambda_{\rm N}(t)} \left(\sum_{k=0}^{K-1}\frac{\Lambda_{\rm L}(t)^k}{k!} \frac{\Lambda_{\rm N}(t)^{K-k-1}}{(K-k-1)!} \right).
\end{align}
Combining the two terms,
\begin{align}
f_{T_K}(t) 
=& e^{-\Lambda_{\rm L}(t) } e^{-\Lambda_{\rm N}(t)} \left(\sum_{k=0}^{K-1}\frac{\Lambda_{\rm L}(t)^k}{k!} \frac{\Lambda_{\rm N}(t)^{K-k-1}}{(K-k-1)!} \right) \cdot \left(-\Lambda_{\rm L}'(t) - \Lambda_{\rm N}'(t) \right) \nonumber \\
=& e^{-\Lambda_{\rm L}(t) } e^{-\Lambda_{\rm N}(t)} \frac{\left(\Lambda_{\rm L}(t) + \Lambda_{\rm N}(t) \right)^{K-1}}{(K-1)!}\cdot \left(-\Lambda_{\rm L}'(t) - \Lambda_{\rm N}'(t) \right),
\end{align}
which completes the proof.

\section{Proof of Lemma 2} \label{appen:lem2}
We start with characterizing the joint CDF of $T_1$ and $T_K$ denoted as $F_{T_1, T_K}(t_1, t_K)$ assuming that $t_K < t_1$. The joint CDF $F_{T_1, T_K}(t_1, t_K)$ is the probability that there are $1\le k < K$ BS whose link power is in $\left[t_K, t_1\right)$, and simultaneously, there is no BS whose link power is in $\left[t_1, \infty\right)$. 
\begin{align}
& F_{T_1, T_K}(t_1, t_K) = \mathbb{P}\left[\{T_1 < t_1\} \cap \{T_K < t_K\} \right] \nonumber \\
&= \sum_{k = 1}^{K-1} \sum_{\ell = 0}^{k} \mathbb{P}\left[\{N_{\rm L}(t_K, t_1) = \ell\} \cap \{N_{\rm N}(t_K, t_1) = k-\ell \}\right] \cdot \mathbb{P}\left[\{N_{\rm L}\left(t_1, \infty \right) =0\}\cap \{N_{\rm N}(t_1, \infty) = 0\}\right] \nonumber \\
&= e^{-\Lambda_{\rm L}(t_K)} e^{-\Lambda_{\rm N}(t_K)} \sum_{k = 1}^{K-1} \sum_{\ell = 0}^{k} \frac{\Lambda_{\rm L}(t_K, t_1)^{\ell}}{\ell!} \frac{\Lambda_{\rm N}(t_K, t_1)^{k-\ell}}{(k-\ell)!}
\end{align}
where $\Lambda_{\rm L}(t_K, t_1)$ is the differential intensity measure defined as
\begin{align}
\Lambda_{\rm L}(t_K, t_1) =& \mathbb{E}\left[\sum_{{\bf{d}} \in \Phi_{\rm L}} {\bf{1}}\left(t_K < C_{\rm L} \left\| {\bf{d}} \right\|^{-\alpha_{\rm L}} < t_1 \right) \right] \nonumber \\
\mathop{=}^{(a)}&  2\pi \lambda \int_{\psi_{\rm L}(t_1)}^{\psi_{\rm L}(t_K)} p(x)  x {\rm d} x 
\mathop{=}^{} \Lambda_{\rm L}(t_K) - \Lambda_{\rm L}(t_1),
\end{align}
where (a) follows Campbell's theorem. The joint PDF is obtained by differentiating $F_{T_1, T_K}(t_1, t_K)$ by $t_1$ and $t_K$, 
\begin{align}
f_{T_1, T_K}(t_1, t_K) =& \frac{\partial^2 F_{T_1, T_K}(t_1 t_K)}{\partial t_1 \partial t_K}.
\end{align}
We first compute the derivative regarding $t_K$. 
\begin{align} \label{eq:jointPDF_tk}
\frac{\partial F_{T_1, T_K(t_1, t_K)}}{\partial t_K} =& -\Lambda'_{\rm L}(t_K) e^{-\Lambda_{\rm L}(t_K)} e^{-\Lambda_{\rm N}(t_K)} \sum_{k = 1}^{K-1} \sum_{\ell = 0}^{k} \frac{\Lambda_{\rm L}(t_K, t_1)^{\ell}}{\ell!} \frac{\Lambda_{\rm N}(t_K, t_1)^{k-\ell}}{(k-\ell)!} \nonumber \\
& +\Lambda'_{\rm L}(t_K) e^{-\Lambda_{\rm L}(t_K)} e^{-\Lambda_{\rm N}(t_K)} \sum_{k = 1}^{K-1} \sum_{\ell = 1}^{k} \frac{\Lambda_{\rm L}(t_K, t_1)^{(\ell-1)}}{(\ell-1)!} \frac{\Lambda_{\rm N}(t_K, t_1)^{k-\ell}}{(k-\ell)!} \nonumber \\
& -\Lambda'_{\rm N}(t_K) e^{-\Lambda_{\rm L}(t_K)} e^{-\Lambda_{\rm N}(t_K)} \sum_{k = 1}^{K-1} \sum_{\ell = 0}^{k} \frac{\Lambda_{\rm L}(t_K, t_1)^{\ell}}{\ell!} \frac{\Lambda_{\rm N}(t_K, t_1)^{k-\ell}}{(k-\ell)!}  \nonumber \\
& + \Lambda'_{\rm N}(t_K) e^{-\Lambda_{\rm L}(t_K)} e^{-\Lambda_{\rm N}(t_K)} \sum_{k = 1}^{K-1} \sum_{\ell = 0}^{k-1} \frac{\Lambda_{\rm L}(t_K, t_1)^{\ell}}{\ell!} \frac{\Lambda_{\rm N}(t_K, t_1)^{k-\ell-1}}{(k-\ell-1)!}.
\end{align}
Simplifying \eqref{eq:jointPDF_tk}, we have
\begin{align} \label{eq:jointPDF_tk_simple}
\frac{\partial F_{T_1, T_K(t_1, t_K)}}{\partial t_K} =& -\Lambda_{\rm L}'(t_K) e^{-\Lambda_{\rm L}(t_K)} e^{-\Lambda_{\rm N}(t_K)} \left(\sum_{k = 0}^{K-1}\frac{\Lambda_{\rm L}(t_K, t_1)^{k}}{k!} \frac{\Lambda_{\rm N}(t_K, t_1)^{K-1-k}}{(K-1-k)!} -1\right) \nonumber \\
& -\Lambda_{\rm N}'(t_K) e^{-\Lambda_{\rm L}(t_K)} e^{-\Lambda_{\rm N}(t_K)} \left(\sum_{k = 0}^{K-1} \frac{\Lambda_{\rm L}(t_K, t_1)^{k}}{k!} \frac{\Lambda_{\rm N}(t_K, t_1)^{K-1-k}}{(K-1-k)!} -1\right).
\end{align}
Next, we differentiate \eqref{eq:jointPDF_tk_simple} regarding $t_1$. 
\begin{align}
\frac{\partial^2 F_{T_1, T_K(t_1, t_K)}}{\partial t_K \partial t_1} =& \Lambda'_{\rm L}(t_K) \Lambda'_{\rm L}(t_1) e^{-\Lambda_{\rm L}(t_K)} e^{-\Lambda_{\rm N}(t_K)} \left(\sum_{k = 1}^{K-1}\frac{\Lambda_{\rm L}(t_K, t_1)^{k-1}}{(k-1)!} \frac{\Lambda_{\rm N}(t_K, t_1)^{K-1-k}}{(K-1-k)!} \right) \nonumber \\
& + \Lambda'_{\rm L}(t_K) \Lambda'_{\rm N}(t_1) e^{-\Lambda_{\rm L}(t_K)} e^{-\Lambda_{\rm N}(t_K)} \left(\sum_{k = 0}^{K-2}\frac{\Lambda_{\rm L}(t_K, t_1)^{k}}{k!} \frac{\Lambda_{\rm N}(t_K, t_1)^{K-2-k}}{(K-2-k)!} \right) \nonumber \\
& + \Lambda'_{\rm N} (t_K) \Lambda'_{\rm L}(t_1) e^{-\Lambda_{\rm L}(t_K)} e^{-\Lambda_{\rm N}(t_K)} \left(\sum_{k = 1}^{K-1}\frac{\Lambda_{\rm L}(t_K, t_1)^{k-1}}{(k-1)!} \frac{\Lambda_{\rm N}(t_K, t_1)^{K-1-k}}{(K-1-k)!} \right) \nonumber \\
& + \Lambda'_{\rm N} (t_K) \Lambda'_{\rm N}(t_1) e^{-\Lambda_{\rm L}(t_K)} e^{-\Lambda_{\rm N}(t_K)} \left(\sum_{k = 0}^{K-2}\frac{\Lambda_{\rm L}(t_K, t_1)^{k}}{k!} \frac{\Lambda_{\rm N}(t_K, t_1)^{K-2-k}}{(K-2-k)!} \right). 
\end{align}
Further simplifying, we finally have
\begin{align}
& \frac{\partial^2 F_{T_1, T_K(t_1, t_K)}}{\partial t_K \partial t_1} \nonumber \\
=& \left(\begin{array}{c}{ \Lambda'_{\rm L}(t_K) \Lambda'_{\rm L}(t_1) + \Lambda'_{\rm L}(t_K) \Lambda'_{\rm L}(t_1) } \\{+ \Lambda'_{\rm L}(t_K) \Lambda'_{\rm L}(t_1) + \Lambda'_{\rm L}(t_K) \Lambda'_{\rm L}(t_1)} \end{array} \right) \cdot e^{-\Lambda_{\rm L}(t_K)} e^{-\Lambda_{\rm N}(t_K)} \frac{\left(\Lambda_{\rm L}(t_K, t_1) + \Lambda_{\rm N}(t_K, t_1)\right)^{K-2}}{(K-2)!}.
\end{align}
This completes the proof.

\section{Proof of Lemma 3} \label{appen:lem3}
We first transform $C_{s} \left\|{\bf{d}}_i^{(m)} \right\|^{-\alpha_s}$ to the link power variable $t_i$. Conditioning on that the $K_m$-th strongest link power in the coordination set $\CMcal{A}_m$ is $T$, the interference is written as
\begin{align}
I_m = \sum_{t_i < T} P_m G(\theta_{i}^{(m)}) H_i^{(m)} t_i,
\end{align}
where $H_i^{(m)} =\left|\tilde \beta_i^{(m)}\right|^2 \sim {\rm Exp}(1)$. Due to the Displacement theorem, $I_m$ is a one dimensional PPP with the intensity measure $\Lambda^{(m)}(t)$ defined as
\begin{align}
\Lambda^{(m)} (t) = \mathbb{E}\left[\sum_{{\bf{d}} \in \Phi_m} {\bf{1}}\left( C_{s} \left\| {\bf{d}} \right\|^{-\alpha_s} > t \right) \right].
\end{align}
Following the property of a PPP, the conditional Laplace transform of $I_m$ is written as
\begin{align}
&\CMcal{L}_{I_m | T}(s)  = \mathbb{E}\left[\left. e^{-s I_m} \right| T \right] \nonumber \\
&\mathop{=}^{(a)} \exp\left(- \int_{T}^{0} \frac{sP_m G(\theta_{i}^{(m)})t}{1 +sP_m G(\theta_{i}^{(m)})t } \Lambda^{(m)} ({\rm d} t) \right) \nonumber \\
&\mathop{=}^{(b)} \exp\left(- \left[\int_{T}^{0} \frac{sP_m G t}{1 +sP_m Gt } \tilde \Lambda^{(m)} ({\rm d} t) + \int_{T}^{0} \frac{sP_m gt}{1 +sP_m gt } \tilde {\tilde \Lambda}^{(m)} ({\rm d} t) \right]  \right),
\end{align}
where (a) follows the Laplace transform of an exponential random variable with unit mean and also the PGFL of a PPP, and (b) follows independent thinning with $\tilde \Lambda^{(m)}(t) = 1 / N  \cdot \Lambda^{(m)}(t)$ and $\tilde {\tilde \Lambda}^{(m)}(t) = (1 - 1/N) \cdot \Lambda^{(m)}(t)$. We characterize the intensity measure $ \Lambda^{(m)}( t)$ as follows 
\begin{align}
\Lambda^{(m)} (t) =& \mathbb{E}\left[\sum_{{\bf{d}} \in \Phi_{m}} {\bf{1}}\left(C_{s} \left\| {\bf{d}} \right\|^{-\alpha_{s}}>t \right) \right] \nonumber \\
=& \mathbb{E}\left[\sum_{{\bf{d}} \in \Phi_{m,{\rm L}}} {\bf{1}}\left(C_{s} \left\| {\bf{d}} \right\|^{-\alpha_{s}}>t \right) \right] + \mathbb{E}\left[\sum_{{\bf{d}} \in \Phi_{m, {\rm N}}} {\bf{1}}\left(C_{s} \left\| {\bf{d}} \right\|^{-\alpha_{s}}>t \right) \right] \nonumber \\
=& 2\pi \lambda_m \int_{0}^{\left(\frac{C_{\rm L}}{t} \right)^{1/\alpha_{\rm L}}} p(x) x{\rm d} x + 2\pi \lambda_m \int_{0}^{\left(\frac{C_{\rm N}}{t} \right)^{1/\alpha_{\rm N}}} (1-p(x)) x{\rm d} x.
\end{align}
Then we have 
\begin{align}
\Lambda^{(m)}({\rm d} t) =  -\frac{2\pi \lambda_m }{\alpha_{\rm L} C_{\rm L}} p\left(\left(\frac{C_{\rm L}}{t} \right)^{1/\alpha_{\rm L}} \right) \left( \frac{C_{\rm L}}{t} \right)^{2/\alpha_{\rm L}+1} {\rm d} t - \frac{2\pi \lambda_m}{\alpha_{\rm N} C_{\rm N}} \left(1 - p\left(\left(\frac{C_{\rm N}}{t} \right)^{1/\alpha_{\rm N}} \right) \right) \left( \frac{C_{\rm N}}{t}\right)^{2/\alpha_{\rm N}+1} {\rm d} t.
\end{align}
For marginalizing with $T$, we use the PDF $f^{(m)}_{T}(t)$. Finally we reach 
\begin{align}
&\CMcal{L}_{I_m}(s) = \int_{T = 0}^{\infty}  \CMcal{L}_{I_m | T}(s) f^{(m)}_{T_{K_m}}(T) {\rm d} T,
\end{align}
which completes the proof. 

\bibliographystyle{IEEEtran}
\bibliography{ref_mcc_mmwave}

\end{document}